\tikzset{thick red/.style={thick,color=darkred}}
\tikzset{edgelabel/.style = {midway,font=\small}}
\tikzset{slopedlabel/.style = {edgelabel,sloped}}
\definecolor{darkred}{RGB}{170,0,0}
\protected\def\tikz@nonactivecolon{\ifmmode\mathrel{\mathop\ordinarycolon}\else:\fi} 
\definecolor{keywordcolor}{rgb}{0.7, 0.1, 0.1}   
\definecolor{tacticcolor}{rgb}{0.1, 0.2, 0.6}    
\definecolor{commentcolor}{rgb}{0.4, 0.4, 0.4}   
\definecolor{symbolcolor}{rgb}{0.0, 0.1, 0.6}    
\definecolor{sortcolor}{rgb}{0.1, 0.5, 0.1}      
\definecolor{DarkBlue}{rgb}{0.03, 0.21, 0.48}    
\mathchardef\texthyphen="2D
\newcommand*{\susp}[1]{{\Sigma}{#1}}
\newcommand*{\rev}[1]{{#1}^{-1}}
\newcommand*{\bN}{{\mathbb{N}}}
\newcommand*{\bF}{{\mathbb{F}}}
\newcommand*{\bZ}{{\mathbb{Z}}}
\newcommand*{\bS}{{\mathbb{S}}}
\newcommand*{\RP}{{\mathbb{R}\mathrm{P}}}
\newcommand*{\pt}{{\mathrm{pt}}}
\newcommand*{\dblslash}{\mathbin{/\kern-3pt/}}
\renewcommand*{\equiv}{\mathrel{\simeq}}
\DeclarePairedDelimiter\angled{\langle}{\rangle} 
\DeclarePairedDelimiter\trunc{\lVert}{\rVert} 
\DeclarePairedDelimiter\tr{\lvert}{\rvert} 
\DeclareMathOperator\Decat{Decat}
\DeclareMathOperator\Disc{Disc}
\DeclareMathOperator\Aut{Aut}
\DeclareMathOperator\BAut{BAut}
\DeclareMathOperator\ap{ap}
\DeclareMathOperator\B{B}
\newcommand*\BS{\mathop{BS}}    
\newcommand*\Sym{S} 
\newcommand*\BZ{\mathop{\mathrm{B}\bZ}}
\DeclareMathOperator\id{id}
\DeclareMathOperator\im{im}
\DeclareMathOperator\fib{fib}
\DeclareMathOperator\Fin{Fin}
\DeclareMathOperator\istrunc{istrunc}
\DeclareMathOperator\iscontr{iscontr}
\DeclareMathOperator\isconn{isconn}
\newcommand*{\Set}{\mathrm{Set}}
\newcommand*{\Type}{\mathrm{Type}}
\newcommand*{\GType}{\mathrm{GType}} 
\newcommand*{\Grp}{\mathrm{\texthyphen Group}} 
\newcommand*{\Group}{\mathrm{Group}}
\newcommand*{\AbGroup}{\mathrm{AbGroup}}
\newcommand*{\iDecat}{\mathrm{\infty\texthyphen\mathrm{Decat}}}
\newcommand*{\iDisc}{\mathrm{\infty\texthyphen\mathrm{Disc}}}
\renewcommand*{\phi}{\varphi}
\newcommand*{\blank}{{-}}
\newcommand*{\leanref}[1]{{\normalfont[\texttt{\color{DarkBlue}#1}]}}
\newcommand*{\ditto}{--- \raisebox{-0.5ex}{''} ---}
\declaretheorem{theorem}
\declaretheorem{lemma}
\declaretheorem{proposition}
\declaretheorem{corollary}
\declaretheorem[style=definition]{definition}
\title{Higher Groups in Homotopy Type Theory}
\date{\today}
\author{Ulrik Buchholtz, Floris van Doorn and Egbert Rijke}
\begin{document}

\maketitle

\begin{abstract}
  We present a development of the theory of higher groups, including
  infinity groups and connective spectra, in homotopy type theory. An
  infinity group is simply the loops in a pointed, connected type,
  where the group structure comes from the structure inherent in the
  identity types of Martin-L{\"o}f type theory. We investigate
  ordinary groups from this viewpoint, as well as higher dimensional
  groups and groups that can be delooped more than once. A major result
  is the stabilization theorem, which states that if an $n$-type can be
  delooped $n+2$ times, then it is an infinite loop type. Most of the
  results have been formalized in the Lean proof assistant.
\end{abstract}

\section{Introduction}
\label{sec:introduction}

\begin{table*}
  \caption{\label{tab:periodic}Periodic table of $k$-tuply groupal $n$-groupoids.}
\makebox[\textwidth][c]{
  \begin{tabular}{clllll} \toprule
    $k\setminus n$ & $0$ & $1$ & $2$ & $\cdots$ & $\infty$ \\
    \midrule
    $0$ & pointed set & pointed groupoid & pointed $2$-groupoid & $\cdots$ & pointed $\infty$-groupoid \\
    $1$ & group & $2$-group & $3$-group & $\cdots$ & $\infty$-group \\
    $2$ & abelian group & braided $2$-group & braided $3$-group & $\cdots$ & braided $\infty$-group \\
    $3$ & \ditto & symmetric $2$-group & sylleptic $3$-group & $\cdots$ & sylleptic $\infty$-group \\
    $4$ & \ditto & \ditto & symmetric $3$-group & $\cdots$ & ?? $\infty$-group \\
    $\vdots$ & \mbox{}\quad$\vdots$ & \mbox{}\quad$\vdots$ & \mbox{}\quad$\vdots$ & $\ddots$ & \mbox{}\quad$\vdots$ \\
    $\omega$ & \ditto & \ditto & \ditto & $\cdots$ & connective spectrum \\
    \bottomrule
  \end{tabular}%
  }
\end{table*}

The homotopy hypothesis is the statement that homotopy $n$-types (topological spaces with trivial homotopy groups above level $n$) correspond to $n$-groupoids for $n\in\mathbb{N}\cup\{\infty\}$ via the fundamental $\infty$-groupoid construction.
In Grothendieck's original version in \emph{Pursuing
  Stacks}~\cite{Grothendieck1983} this was a conjecture
about a particular model of $\infty$-groupoids.
It is also a theorem for many particular models of $\infty$-groupoids,
for example the Kan simplicial sets,
but it is now mostly taken to be a property \emph{defining}
$\infty$-groupoids up to equivalence.

In this paper, we investigate the homotopy hypothesis in the context of homotopy type theory (HoTT). HoTT refers to the homotopical interpretation of
Martin-L{\"o}f's dependent type theory \cite{AwodeyWarren2009,Voevodsky06}. 
In this homotopical interpretation, every type-theoretical construction corresponds to a
homotopy-invariant construction on spaces. 

In HoTT, every type has a path space given by the identity type. For a
pointed type we can construct the loop space, which has the structure
of an $\infty$-group. Moreover, if the type is \emph{truncated}, then
we can retreive the usual notion of groups, 2-groups and higher
groups. This allows us to define a higher group internally in the
language of type theory as a type that is the loop space of a pointed connected type, its delooping.

We also investigate groups that can be delooped more than once, which gives $n$-groups with additional coherences. The full family of groups we consider is in \autoref{tab:periodic}, which we will explain in detail in \autoref{sec:higher-groups}.

Our approach is additionally validated by the corresponding observation in
$\infty$-topos theory, where it is a theorem that the $\infty$-category of
pointed, connected objects in $\mathcal X$ is equivalent to the
$\infty$-category of higher group objects in $\mathcal X$, for any
$\infty$-topos $\mathcal X$~\cite[Lemma~7.2.2.11(1)]{LurieHTT}.

We have formalized most of our results in the HoTT library~\cite{vDvRB2017HoTTLean} of the Lean Theorem Prover~\cite{Moura2015}. 
The formalized results can be found in the file
\url{https://github.com/cmu-phil/Spectral/blob/master/higher_groups.hlean}.
We will indicate the major formalized results in this paper by
referring to the name in the formalization inside square brackets.
For more information about the formalization, see \autoref{sec:formalization}.

We are indebted to Michael Shulman for writing a blog
post~\cite{Shulman-classifying} on classifying spaces from a univalent
perspective.

\section{Preliminaries}
\label{sec:preliminaries}

In this paper we will work in the type theory of the HoTT book~\cite{TheBook}, although all arguments will also hold in a cubical type theory, such as \cite{CCHM2016,chtt}. In this section we briefly introduce the concepts we need for the rest of the paper.

The type theory contains dependent function types $(x : A) \to B(x)$, which are more traditionally denoted as $\Pi_{x : A}B(x)$ and dependent pair types $(x : A) \times B(x)$, which are traditionally denoted as $\Sigma_{x : A}B(x)$. We choose to use this Agda-inspired notation because we often deal with deeply nested dependent sum types.

Within a type $A$ we have the identity type or path type ${=}_A:A\to A\to \Type$. We have various operations on paths, such as concatenation $p\cdot q$ and inversion $\rev{p}$ of paths. The functorial action of a function $f:A\to B$ on a path $p:a_1=_A a_2$ is denoted $\ap_f(p):f(a_1)=f(a_2)$. The constant path is denoted $1_a:a=a$.

A type $A$ can be $n$-truncated, denoted $\istrunc_n A$, which is
defined by recursion on $n : \bN_{-2} := \mathbb{Z}_{\ge-2}$:
\begin{align*}
  \istrunc_{-2} A & := \iscontr A := (a : A) \times \bigl((x : A) \to (a = x)\bigr) \\
  \istrunc_{n+1} A & := (x\;y : A) \to \istrunc_n (x=y)
\end{align*}
For any type $A$ we write $\trunc{A}_n$ for its $n$-truncation, i.e.,~$\trunc{A}_n$ is an $n$-truncated type equipped with a map $\tr{\blank}_n:A\to\trunc{A}_n$ such that for any $n$-truncated type $B$ the precomposition map
\begin{equation*}
(\trunc{A}_n\to B)\to (A\to B)
\end{equation*}
is an equivalence. Then we define being $n$-connected as $\isconn_n A := \iscontr \trunc A_n$. Properties of truncations and connected maps are established in Chapter 7 of \cite{TheBook}.

The type of pointed types is
$\Type_\pt := (A : \Type) \times (\pt : A)$. The type of $n$-truncated types is
$\Type^{\le n} := (A : \Type) \times \istrunc_n A$ and for $n$-connected types it is
$\Type^{>n} := (A : \Type) \times \isconn_n A$. We will combine these notations as needed.

Given $A:\Type_\pt$ we define the \emph{loop space} $\Omega
A:=(\pt=_A\pt)$, which is pointed with basepoint $1_\pt$. The
\emph{homotopy groups} of $A$ are defined to be $\pi_k A:=
\trunc{\Omega^k A}_0$. These are group in the usual sense when
$k\ge1$, with neutral element $\tr{1}$ and group
operation induced by path concatenation.

Given $A,B:\Type_\pt$ the type of \emph{pointed maps} from $A$ to $B$
is $(A\to_\pt B):=(f : A \to B)\times (f(\pt)=_B\pt)$. Given $f:A\to_\pt
B$ we write $f(a):B$ for the first projection and $f_0:f(\pt)=\pt$ for
the second projection. The \emph{fiber} of a pointed map is defined by
$\fib(f):=(a:A)\times (f(a)=_B\pt)$, which is pointed with basepoint $(\pt,f_0)$.

In HoTT we can use \emph{higher inductive types} to construct Eilenberg-MacLane spaces $K(G,n)$~\cite{FinsterLicata2014}. For a group $G$ we define $K(G,1)$ as the following HIT.\\
\indent \texttt{HIT} $K(G,1) :=$ \\
\indent $\bullet\ \star : K(G,1)$; \\
\indent $\bullet\ p : G \to \star=\star$; \\
\indent $\bullet\ q : (g\;h : G) \to p(gh) = p(g) \cdot p(h)$; \\
\indent $\bullet\ \epsilon : \istrunc_1K(G,1)$. \\
(Using the univalent universe $\Type$, other direct definitions are
also possible, for instance, $K(G,1)$ is equivalent to the type of
small $G$-torsors.)
Let $\Sigma X$ denote the suspension of $X$, i.e., the homotopy pushout of $1\leftarrow X\to 1$. For an abelian group $A$ can now inductively define $K(A,n+1):=\trunc{\Sigma K(A,n)}_{n+1}$. Then we have the following result~\cite{FinsterLicata2014}.

\begin{theorem}\label{thm:EM-spaces}
  Let $G$ be a group and $n\geq 1$, and assume that $G$ is abelian
  when $n>1$. The space $K(G,n)$ is $(n-1)$-connected and
  $n$-truncated and there is a group isomorphism $\pi_nK(G,n) \equiv G$.
\end{theorem}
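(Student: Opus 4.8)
The plan is to proceed by induction on $n$, treating the base case $n=1$ and the inductive step separately, since the constructions of $K(G,1)$ and $K(A,n+1)$ are of a quite different nature.

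For the base case $n = 1$: I would first check that $K(G,1)$ is $1$-truncated, which is immediate from the constructor $\epsilon$, and that it is $0$-connected, i.e., that $\trunc{K(G,1)}_0$ is contractible. The latter follows by truncation induction: it suffices to show that the point constructor $\star$ and the path constructors make $\trunc{K(G,1)}_0$ a contractible type, and since every generating path $p(g)$ becomes trivial in the $0$-truncation, the only point is $\star$. The main work is computing $\pi_1 K(G,1) = \trunc{\Omega K(G,1)}_0 \equiv G$. Since $K(G,1)$ is already $1$-truncated, $\Omega K(G,1)$ is a set and the truncation is superfluous, so I need $\Omega K(G,1) \equiv G$ as groups. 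I would construct the map $G \to \Omega K(G,1)$ as $p$ itself, which is a homomorphism by the constructor $q$. For the inverse, I would use the standard encode--decode technique: define a family $\mathrm{code} : K(G,1) \to \Set$ by $K(G,1)$-recursion, sending $\star$ to (the underlying set of) $G$, sending each $p(g)$ to the equivalence "multiply by $g$" (which requires $G$ to be a group, not just a set), and using $q$ to verify the coherence; then show $(\star = x) \equiv \mathrm{code}(x)$ fiberwise by transporting the point $e : \mathrm{code}(\star) = G$ along paths, and conclude at $x \mathrel{:=} \star$.

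For the inductive step, assume the statement for $n$ (with $G$ abelian, say $G = A$), so $K(A,n)$ is $(n-1)$-connected, $n$-truncated, and $\pi_n K(A,n) \equiv A$. We have $K(A,n+1) := \trunc{\Sigma K(A,n)}_{n+1}$, which is $(n+1)$-truncated by construction. For connectivity, the suspension of an $(n-1)$-connected type is $n$-connected (suspension raises connectivity by one), and truncation does not decrease connectivity, so $K(A,n+1)$ is $n$-connected, i.e., $((n+1)-1)$-connected. For the homotopy group, I would use the Freudenthal suspension theorem: since $K(A,n)$ is $(n-1)$-connected, the map $K(A,n) \to \Omega\Sigma K(A,n)$ is $(2n-1)$-connected, hence (as $2n-1 \ge n$ when $n \ge 1$) induces an isomorphism on $\pi_n$, giving $\pi_n K(A,n) \equiv \pi_n \Omega \Sigma K(A,n) = \pi_{n+1} \Sigma K(A,n)$. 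Finally, $(n+1)$-truncation does not change $\pi_{n+1}$, so $\pi_{n+1} K(A,n+1) \equiv \pi_{n+1}\Sigma K(A,n) \equiv \pi_n K(A,n) \equiv A$, combining everything.

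The main obstacle is the base case's encode--decode argument, specifically verifying that $\mathrm{code}$ is well-defined: one must supply, for the path constructor $q(g,h)$, a proof that the equivalence associated to $gh$ equals the composite of those for $g$ and $h$ in the universe of sets, which unwinds to the associativity/compatibility of the $G$-action and requires care with the coherence of $\mathrm{ua}$. In the inductive step, the principal technical point is correctly invoking Freudenthal and bookkeeping the connectivity and truncation indices (checking $2n-1 \ge n$, i.e., $n \ge 1$, which is exactly our hypothesis), together with the standard lemma that $n$-truncation preserves $\pi_k$ for $k \le n$; these are all available from Chapter 7 and Chapter 8 of \cite{TheBook}.
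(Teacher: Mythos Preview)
The paper does not give its own proof of this theorem; it simply cites \cite{FinsterLicata2014}. Your outline is essentially the Finster--Licata strategy (encode--decode for $n=1$, Freudenthal for the inductive step), so in spirit you are aligned with the intended reference.

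However, there is a genuine gap in your inductive step, hidden by an off-by-one slip. With the paper's (HoTT book) conventions, Freudenthal says: if $A$ is $m$-connected with $m\ge 0$, then $A\to\Omega\Sigma A$ is $2m$-connected. Since $K(A,n)$ is $(n-1)$-connected, the unit map is $(2n-2)$-connected, not $(2n-1)$-connected as you wrote. A $(2n-2)$-connected map induces an isomorphism on $\pi_i$ only for $i\le 2n-2$, so to get an isomorphism on $\pi_n$ you need $n\le 2n-2$, i.e., $n\ge 2$. For the first inductive step, going from $K(A,1)$ to $K(A,2)$, Freudenthal gives only a \emph{surjection} $\pi_1 K(A,1)\twoheadrightarrow \pi_2\Sigma K(A,1)$, not an isomorphism, and your argument does not close.

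This is not a cosmetic issue: it is exactly the place where the abelianness of $A$ must be used nontrivially (note that $\pi_2$ of anything is abelian, so a bare surjection from $\pi_1 K(G,1)\cong G$ cannot be an isomorphism unless $G$ is already abelian, and even then you need an argument). In \cite{FinsterLicata2014} this is handled by a second, separate encode--decode proof that $\Omega K(A,2)\simeq K(A,1)$, which genuinely requires the commutativity of $A$ to build the code family over $K(A,2)=\trunc{\Sigma K(A,1)}_2$. Once that base case is in hand, your Freudenthal argument is correct for all $n\ge 2$. So: keep your plan, but add a dedicated treatment of $K(A,2)$ before invoking Freudenthal.
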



In some of our informal arguments we use the descent theorem for
pushouts,\footnote{Recall from~\cite[\S6.1.3]{LurieHTT}, following
  ideas from Charles Rezk, that we can
  \emph{define} the $\infty$-toposes among locally cartesian closed
  $\infty$-categories as those whose colimits are \emph{van Kampen},
  viz., satisfying descent.}
which states that for a commuting cube of types
\begin{equation}\label{eq:cube}
\begin{tikzcd}
& A_{11} \arrow[dl] \arrow[dr] \arrow[d] \\
A_{10} \arrow[d] & B_{11} \arrow[dl] \arrow[dr] & A_{01} \arrow[dl,crossing over] \arrow[d] \\
B_{10} \arrow[dr] & A_{00} \arrow[d] \arrow[from=ul,crossing over] & B_{01} \arrow[dl] \\
& B_{00},
\end{tikzcd}
\end{equation}
if the bottom square is a pushout and the vertical squares are pullbacks, then the top square is also a pushout. We will use the following slight generalization.

\begin{theorem}
Consider a commuting cube of types as in~\eqref{eq:cube}, and suppose the vertical squares are pullback squares.
Then the square
\begin{equation*}
\begin{tikzcd}
A_{10}\sqcup^{A_{11}} A_{01} \arrow[r] \arrow[d] & A_{00} \arrow[d] \\
B_{10}\sqcup^{B_{11}} B_{01} \arrow[r] & B_{00}
\end{tikzcd}
\end{equation*}
is a pullback square.
\end{theorem}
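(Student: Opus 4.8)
The plan is to reduce the statement to the fact that a pullback functor preserves pushouts, and to deduce that fact from the ordinary descent theorem stated just above. Write $P := B_{10}\sqcup^{B_{11}} B_{01}$ for the pushout of the bottom span and let $f : P \to B_{00}$ be the map induced by the cocone $B_{10}\to B_{00}\leftarrow B_{01}$. Saying that the square in the statement is a pullback amounts to saying that the comparison map $A_{10}\sqcup^{A_{11}} A_{01}\to A_{00}\times_{B_{00}} P$, induced by the cocone $A_{10}\to A_{00}\leftarrow A_{01}$ together with the composites $A_{10}\to B_{10}\to P$ and $A_{01}\to B_{01}\to P$, is an equivalence. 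So the task is to identify $A_{00}\times_{B_{00}} P$ with $A_{10}\sqcup^{A_{11}} A_{01}$ over $A_{00}$.

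First I would show that pulling the bottom pushout square back along $A_{00}\to B_{00}$ again produces a pushout square, i.e.\ that
\[
  A_{00}\times_{B_{00}} P \;\equiv\; (A_{00}\times_{B_{00}} B_{10})\sqcup^{A_{00}\times_{B_{00}} B_{11}}(A_{00}\times_{B_{00}} B_{01}).
\]
This is precisely the assertion that the pullback functor along $A_{00}\to B_{00}$ preserves this pushout. One obtains it from the descent theorem above by building the cube whose bottom face is the pushout square presenting $P$, whose top face has the three vertices $A_{00}\times_{B_{00}} B_{ij}$ together with apex $A_{00}\times_{B_{00}} P$, and whose vertical maps are the second projections; the four side faces are then pullback squares by repeated pullback-pasting (using that the restriction of $f$ to each $B_{ij}$ is the corresponding leg of the cocone), so descent makes the top face a pushout. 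Alternatively one may simply invoke the standard fact that pullback preserves colimits in a locally cartesian closed $\infty$-category.

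Next I would identify the three pulled-back vertices with the $A_{ij}$. The two front side faces of the given cube are pullbacks by hypothesis, hence $A_{10}\equiv A_{00}\times_{B_{00}} B_{10}$ and $A_{01}\equiv A_{00}\times_{B_{00}} B_{01}$ over $A_{00}$; pasting a back side face onto a front one then gives $A_{11}\equiv A_{00}\times_{B_{00}} B_{11}$. A short diagram chase shows that these equivalences carry the span $(A_{00}\times_{B_{00}} B_{10})\leftarrow(A_{00}\times_{B_{00}} B_{11})\to(A_{00}\times_{B_{00}} B_{01})$ to the top span $A_{10}\leftarrow A_{11}\to A_{01}$, and carry the induced maps down to $A_{00}$ and to $P$ to, respectively, the top cocone and the composites $A_{ij}\to B_{ij}\to P$. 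Combining this with the displayed equivalence identifies $A_{00}\times_{B_{00}} P$ with $A_{10}\sqcup^{A_{11}} A_{01}$ exactly as the comparison map, which completes the argument.

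The step I expect to be the main obstacle is not conceptual but coherence bookkeeping: one must check that all the equivalences above are equivalences \emph{over} $A_{00}$ (and compatible with the maps down to $P$), and that the naturality squares relating the vertex identifications to the span maps genuinely commute, so that it is the comparison map — rather than merely some abstract equivalence of types — that ends up being shown to be an equivalence. This is routine pullback-pasting, and in the Lean formalization it is spelled out in full.
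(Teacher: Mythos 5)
Your proposal is correct and follows essentially the same route as the paper: apply the descent theorem to a cube whose bottom face is the pushout square for $B_{10}\sqcup^{B_{11}} B_{01}$ and whose vertical faces are pullbacks by the pasting lemma, so that the top face exhibits $(B_{10}\sqcup^{B_{11}} B_{01})\times_{B_{00}}A_{00}$ as the pushout of the top span, which is exactly the statement that the square is a pullback. The only difference is that the paper keeps the original span $A_{10}\leftarrow A_{11}\to A_{01}$ as the top face of its cube (pasting shows these faces are already pullbacks over the pushout), which avoids the separate step of identifying $A_{ij}\equiv A_{00}\times_{B_{00}}B_{ij}$ and the attendant coherence bookkeeping you flag in your final paragraph.
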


\begin{proof}
It suffices to show that the pullback 
\begin{equation*}
(B_{10}\sqcup^{B_{11}} B_{01})\times_{B_{00}}A_{00}
\end{equation*}
has the universal property of the pushout. This follows by the descent theorem, since by the pasting lemma for pullbacks we also have that the vertical squares in the cube
\begin{equation*}
\begin{tikzcd}
& A_{11} \arrow[dl] \arrow[dr] \arrow[d] \\
A_{10} \arrow[d] & B_{11} \arrow[dl] \arrow[dr] & A_{01} \arrow[dl,crossing over] \arrow[d] \\
B_{10} \arrow[dr] & (B_{10}\sqcup^{B_{11}} B_{01})\times_{B_{00}}A_{00} \arrow[d] \arrow[from=ul,crossing over] & B_{01} \arrow[dl] \\
& B_{10}\sqcup^{B_{11}} B_{01}
\end{tikzcd}
\end{equation*}
are pullback squares.
\end{proof}
In the formalization, arguments using descent are more conveniently
done via the equivalent principle captured formally as the flattening
lemma~\cite[\S6.12]{TheBook}.

\section{Higher groups}
\label{sec:higher-groups}

Recall that types in HoTT may be viewed as $\infty$-groupoids:
elements are objects, paths are morphisms, higher paths are higher
morphisms, etc.

It follows that \emph{pointed connected} types $A$ may be viewed as higher
groups, with \emph{carrier} $\Omega A := (\pt =_A \pt)$.
The neutral element is the identity path,
the group operation is given by path composition,
and higher paths witness the unit and associativity laws.
Of course, these higher paths are themselves subject to further laws,
etc., but the beauty of the type-theoretic definition is
that we don't have to worry about that:
all the (higher) laws follow from the rules of the identity types.

Writing $G$ for the carrier, it is common to write $BG$ for the pointed
connected type such that $G = \Omega BG$.
We call $BG$ the \emph{delooping} of $G$.
Let us write
\begin{align*}
  \infty\Grp
  &:= (G : \Type) \times (BG : \Type_\pt^{>0}) \times (G \equiv \Omega BG) \\
  &\phantom{:}\equiv (G : \Type_\pt) \times (BG : \Type_\pt^{>0}) \times (G \equiv_\pt \Omega BG) \\
  &\phantom{:}\equiv \Type_\pt^{>0}
\end{align*}
for the type of higher groups, or \emph{$\infty$-groups}.
Note that for $G:\infty\Grp$
we also have $G:\Type$ using the first projection as a coercion. Using
the last definition, this is the loop space map, and not the usual coercion!

We recover the ordinary set-level
groups by requiring that $G$ is a $0$-type, or equivalently, that $BG$
is a $1$-type. This leads us to introduce
\begin{align*}
  n\Grp
  &:= (G : \Type_\pt^{<n}) \times (BG : \Type_\pt^{>0})
    \times (G \equiv_\pt \Omega BG) \\
  &\phantom{:}\equiv \Type_\pt^{>0,\le n}
\end{align*}
for the type of \emph{groupal} (group-like) $(n-1)$-groupoids, also known
as \emph{$n$-groups}. For $G:1\Grp$
a set-level group, we have $BG = K(G,1)$.

For example, the integers $\bZ$ as an additive group are from this
perspective represented by their delooping $\BZ=\bS^1$, i.e., the circle.

Of course, double loop spaces are even better behaved than mere loop
spaces (e.g., they are commutative up to homotopy
by the Eckmann-Hilton argument~\cite[Theorem~2.1.6]{TheBook}).
Say a type $G$ is
\emph{$k$-tuply groupal} if we have a $k$-fold delooping,
$B^kG : \Type_\pt^{\ge k}$, such that $G = \Omega^kB^kG$.

Mixing the two directions, let us introduce the type
\begin{align*}
  (n,k)\GType
  &:= (G : \Type_\pt^{\le n}) \times (B^kG : \Type_\pt^{\ge k})
    \times (G \equiv_\pt \Omega^kB^kG) \\
  &\omit$\phantom{:}\equiv \Type_\pt^{\ge k,\le n+k}$\hfill\text{\leanref{GType\_equiv}}
\end{align*}
for the type of \emph{$k$-tuply groupal $n$-groupoids}.\footnote{This
  is called $n\Type_k$ in \cite{BaezDolan1998}, but here we give equal
  billing to $n$ and $k$,
  and we add the ``G'' to indicate group-structure.}
(We allow taking $n=\infty$ in which case the truncation requirement
is simply dropped. \leanref{InfGType\_equiv})

Note that $n\Grp = (n-1,1)\GType$. This shift in indexing is slightly
annoying, but we keep it to stay consistent with the literature.

Since there are forgetful maps
\begin{equation*}
(n,k+1)\GType \to (n,k)\GType
\end{equation*}
given by $B^{k+1}G\mapsto \Omega B^{k+1}G$
we can also allow $k$ to be infinite, $k=\omega$ by setting
\begin{align*}
(n,\omega)\GType & := \lim\nolimits_k{}(n,k)\GType \\
&\phantom{:}\equiv \bigl(B^{-}G : (k : \bN) \to \Type_\pt^{\ge k,\le n+k}\bigr) \\
  &\qquad \times \bigl((k : \bN) \to B^kG \equiv_\pt \Omega B^{k+1}G \bigr).
\end{align*}
In \autoref{sec:stabilization} we prove the stabilization theorem
(\autoref{thm:stabilization}), from which it follows that
$(n,\omega)\GType=(n,k)\GType$ for $k\geq n+2$.

When $(n,k)=(\infty,\omega)$, this is the type of stably groupal $\infty$-groups,
also known as \emph{connective spectra}. If we also relax the
connectivity requirement, we get the type of all spectra, and we can
think of a spectrum as a kind of $\infty$-groupoid with $k$-morphisms
for all $k\in\bZ$.

The class of higher groups is summarized in~\autoref{tab:periodic}.
We shall prove the correctness of the $n=0$ column in~\autoref{sec:n=0}.

\section{Elementary theory}
\label{sec:elementary-theory}

Given \emph{any} type of objects $A$, any $a : A$ has an
\emph{automorphism group} $\Aut_A a:=\Aut a:=(a=a)$ with
$\BAut a = \im(a : 1 \to A) = (x : A) \times \trunc{a=x}_{-1}$ (the
connected component of $A$ at $a$). Clearly, if $A$ is
$(n+1)$-truncated, then so is $\BAut a$ and so $\Aut a$ is
$n$-truncated, and hence an $(n+1)$-group.

Moving across the homotopy hypothesis, for every pointed type $(X,x)$
we have the \emph{fundamental $\infty$-group of $X$},
$\Pi_\infty(X,x):=\Aut x$. Its $(n-1)$-truncation (an instance of
decategorification, see \autoref{sec:stabilization}) is the
\emph{fundamental $n$-group of $X$}, $\Pi_n(X,x)$,
with corresponding delooping $\mathrm{B}\Pi_n(X,x) = \trunc{\BAut x}_n$.

If we take $A = \Set$, we get the usual symmetric groups
$\Sym_n := \Aut(\Fin n)$, where $\Fin n$ is a set with $n$
elements. (Note that $\BS_n = \BAut(\Fin n)$ is the type of all
$n$-element sets.)
We give further constructions related to ordinary groups
in \autoref{sec:perspectives}.

\subsection{Homomorphisms and conjugation}
\label{sec:homomorphisms}

A homomorphism between higher groups is any
function that can be suitably delooped.
For $G,H : (n,k)\GType$, we define
\begin{align*}
\hom_{(n,k)}(G,H) & := 
(h: G \to_\pt H)\times (B^k h: B^kG \to_\pt B^kH) \\
& \qquad \times (\Omega^k(B^k h) \sim_\pt h) \\
&\phantom{:}\equiv (B^k h: B^kG \to_\pt B^kH).
\end{align*}
For (connective) spectra we need
pointed maps between all the deloopings and pointed homotopies showing
they cohere.

Note that if $h,k : G \to H$ are homomorphisms between set-level
groups, then $h$ and $k$ are \emph{conjugate} if $Bh, Bk : BG \to_\pt BH$ are
\emph{freely} homotopic (i.e., equal as maps $BG \to BH$).

Also observe that $\pi_j(B^kG \to_\pt B^kH) \equiv
\trunc{B^kG \to_\pt \Omega^kB^kH}_0 \equiv
\trunc{\Sigma^jB^kG \to_\pt B^kH}_0 = 0$ for $j+k-1 \ge n+k$,
that is, for $j>n$, so
this suggests that $\hom_{(n,k)}(G,H)$ is $n$-truncated.
(The calculation verifies this for the identity component.)
To prove
this, we need to use an induction using the definition of
$n$-truncated. If $f:\hom_{(n,k)}(G,H)$, then its self-identity type is
equivalent to $\bigl(\alpha : (z : B^kG) \to (f\,z = f\,z)\bigr) \times
\bigl(\alpha\,\pt \cdot g_\pt = f_\pt\bigr)$. This type is no longer a
type of pointed maps, but rather a type of pointed sections of a
fibration of pointed types.
\begin{definition}
  If $X : \Type_\pt$ and $Y : X \to \Type_\pt$, then we introduce the
  type of \emph{pointed sections},
  \[
    (x : X) \to_\pt Y\,x
    := \bigl(s : (x : X) \to Y\,x\bigr)
    \times \bigl(s\,\pt = \pt\bigr).
  \]
  This type is itself pointed by the trivial section $\lambda x, \pt$.
\end{definition}
\begin{theorem}
  Let $X : \Type_\pt^{\ge k}$ be an $(k-1)$-connected, pointed type for
  some $k\ge0$, and
  let $Y : X \to \Type_\pt^{\le n+k}$ be a fibration of
  $(n+k)$-truncated, pointed types for some $n\ge -1$. Then the type
  of pointed sections, $(x : X) \to_\pt Y\,x$, is $n$-truncated.
  \leanref{is\_trunc\_ppi\_of\_is\_conn}
\end{theorem}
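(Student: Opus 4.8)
The plan is to induct on $n$, the truncation parameter. The base case $n = -1$ asks that the type of pointed sections is a mere proposition. Here I would use that $Y\,x$ is $(k-1)$-truncated for every $x$ (since it is $(k-1)$-connected... no — $Y\,x$ is $(n+k)$-truncated with $n = -1$, i.e.\ $(k-1)$-truncated). More to the point: since $X$ is $(k-1)$-connected and each $Y\,x$ is $(k-1)$-truncated, the elimination principle for connected types (Lemma~7.5.7 in \cite{TheBook}, in the form ``sections of $(k-1)$-truncated fibrations over $(k-1)$-connected bases are determined by their value at a point'') shows that the underlying map of any pointed section is uniquely determined, and then the pointing datum $s\,\pt = \pt$ is a path in the $(k-1)$-truncated — in particular $(-1)$-truncated, i.e.\ sub-$0$-truncated — type $Y\,\pt$; combining these two contractibility statements makes the whole section type a proposition. (One must be slightly careful: for $k = 0$ the hypothesis is vacuous and $n \ge -1$ still needs $Y\,x$ to be $(n-1+k)$-truncated, but at $k=0$, $n=-1$ the claim is just that pointed sections of $(-1)$-truncated pointed types form a proposition, which is direct.)

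For the inductive step, suppose the statement holds for $n-1$; I must show $(x:X)\to_\pt Y\,x$ is $n$-truncated, i.e.\ that for any two pointed sections $s, t$ the identity type $s =_{(x:X)\to_\pt Y\,x} t$ is $(n-1)$-truncated. By the characterization of paths in a $\Sigma$-type and in a dependent function type, $s = t$ is equivalent to a pair $\bigl(H : (x:X) \to s\,x = t\,x\bigr) \times \bigl(\text{coherence of }H\text{ at }\pt\text{ with the two pointings}\bigr)$. The key move is to recognize this as \emph{again} a type of pointed sections: set $Y'\,x := (s\,x =_{Y\,x} t\,x)$, pointed by the path obtained from $s\,\pt = \pt = t\,\pt$. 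Since $Y\,x$ is $(n+k)$-truncated, $Y'\,x$ is $(n+k-1) = ((n-1)+k)$-truncated, and it is pointed, so $Y' : X \to \Type_\pt^{\le (n-1)+k}$. Then by the inductive hypothesis (with the same $k$ and with $n$ replaced by $n-1$), $(x:X)\to_\pt Y'\,x$ is $(n-1)$-truncated, which is exactly what we need.

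The step I expect to be the main obstacle is making the identification $\bigl(s =_{(x:X)\to_\pt Y\,x} t\bigr) \equiv \bigl((x:X)\to_\pt Y'\,x\bigr)$ precise, including tracking the basepoint: one needs the equivalence between paths in a pointed-$\Pi$ type and pointed sections of the fiberwise path fibration, with the induced pointing $s_0 \cdot t_0^{-1}$ (or its inverse, depending on conventions) playing the role of $\pt : Y'\,\pt$. This is a somewhat fiddly but standard manipulation with $\Sigma$-types, transport, and function extensionality; once it is set up, the induction closes immediately. In the informal write-up I would state this identification as a lemma and leave the bookkeeping to the reader (and to the formalization \leanref{is\_trunc\_ppi\_of\_is\_conn}).
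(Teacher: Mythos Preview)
Your approach is the same as the paper's: induction on $n$, with the inductive step reducing identity types in the pointed-section type to pointed sections of the fiberwise path fibration. The paper does the step with self-identity types $s = s$, so the fibers are simply $\Omega(Y\,x, s\,x)$; your version with two sections $s,t$ works too and only costs the extra basepoint bookkeeping you already flag.

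Your base case, however, has a slip: the parenthetical ``$(k-1)$-truncated --- in particular $(-1)$-truncated'' is false for $k \ge 1$, so the argument as written does not go through. What you actually want is that, since by Lemma~7.5.7 evaluation at $\pt$ is an equivalence $\bigl((x:X) \to Y\,x\bigr) \simeq Y\,\pt$, the type of pointed sections is equivalent to the based path space $(y : Y\,\pt) \times (y = \pt)$, which is contractible regardless of the truncation level of $Y\,\pt$; no ``$(-1)$-truncated'' claim is needed. The paper phrases this slightly differently: it notes that the section type is pointed by the trivial section $s_0$, and then produces a pointed homotopy $s \sim s_0$ by applying Lemma~7.5.7 to the $(k-2)$-truncated family $x \mapsto (s\,x = s_0\,x)$.
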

\begin{proof}
  The proof is by induction on $n$.

  For the base case $n=-1$ we have to show that the type of pointed
  sections is a mere proposition. Since it is pointed, it must in fact
  be contractible. The center of contraction is the trivial section $s_0$.
  If $s$ is another section, then we get a pointed homotopy from $s$
  to $s_0$ from the elimination principle for pointed, connected
  types~\cite[Lemma~7.5.7]{TheBook}, since the types $s\,x=s_0\,x$ are
  $(k-2)$-truncated.

  To show the result for $n+1$, taking the $n$ case as the induction
  hypothesis, it suffices to show for any pointed section $s$ that its
  self-identity type is $n$-truncated. But this type is equivalent to
  $(x : X) \to_\pt \Omega(Y\,x, s\,x)$, which is again a type of
  pointed sections, and here we can apply the induction hypothesis.
\end{proof}
\begin{corollary}
  Let $k\ge0$ and $n\ge-1$. If $X$ is $(k-1)$-connected, and $Y$ is
  $(n+k)$-truncated, then the type of pointed maps $X \to_\pt Y$ is
  $n$-truncated. In particular, $\hom_{(n,k)}(G,H)$ is an $n$-type for
  $G,H:(n,k)\GType$.
\end{corollary}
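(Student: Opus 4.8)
The plan is to obtain this as an essentially immediate consequence of the preceding theorem on pointed sections, by viewing a pointed map as a pointed section of a \emph{constant} fibration. First I would note that if $Y$ is a pointed, $(n+k)$-truncated type, then the constant family $\lambda x.\,Y : X \to \Type_\pt^{\le n+k}$ is a fibration of $(n+k)$-truncated pointed types, and that unfolding the definition of pointed section gives the identification $\bigl((x : X) \to_\pt Y\bigr) \;=\; \bigl(X \to_\pt Y\bigr)$, since $\bigl(s : (x:X)\to Y\bigr) \times (s\,\pt = \pt)$ is literally $(f : X \to Y)\times (f(\pt)=\pt)$ when $Y$ does not depend on $x$. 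Since $X$ is $(k-1)$-connected with $k\ge 0$ and $n\ge -1$, the hypotheses of the previous theorem hold, and it tells us that this type is $n$-truncated; that proves the first assertion.

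For the ``in particular'' clause, I would invoke the equivalence $\hom_{(n,k)}(G,H) \equiv \bigl(B^kG \to_\pt B^kH\bigr)$ recorded in the definition of $\hom_{(n,k)}$, together with the fact that $n$-truncatedness is preserved by equivalences. From $G : (n,k)\GType$ we get $B^kG : \Type_\pt^{\ge k}$, i.e.\ $B^kG$ is $(k-1)$-connected, and from $H : (n,k)\GType$ we get $B^kH : \Type_\pt^{\le n+k}$, i.e.\ $B^kH$ is $(n+k)$-truncated; so the first assertion applies with $X := B^kG$ and $Y := B^kH$ and shows $\hom_{(n,k)}(G,H)$ is an $n$-type. (If $n=\infty$ is permitted, there is nothing to prove.)

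I do not expect a genuine obstacle here, since all the real work has already been done in the pointed-sections theorem; the only thing to watch is the index bookkeeping --- confirming that the conventions ``$\Type_\pt^{\ge k}$ means $(k-1)$-connected'' and ``$\Type_\pt^{\le m}$ means $m$-truncated'' make the numerology ($X$ is $(k-1)$-connected, $Y$ is $(n+k)$-truncated) match the theorem exactly, and that the constant-family unfolding really identifies pointed sections with pointed maps on the nose. The boundary cases $k=0$ (where $X$ is merely pointed, with no connectivity assumed) and $n=-1$ (where $Y$ is $(k-1)$-truncated and the conclusion is that $X \to_\pt Y$ is a proposition) need no separate treatment, as they are already covered by the base case and hypotheses of that theorem's induction.
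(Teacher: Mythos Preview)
Your proposal is correct and is exactly the intended argument: the paper states this corollary without proof immediately after the pointed-sections theorem, so it is meant to follow by specializing that theorem to the constant fibration $\lambda x.\,Y$, together with the defining equivalence $\hom_{(n,k)}(G,H)\equiv(B^kG\to_\pt B^kH)$. Your index bookkeeping (that $B^kG:\Type_\pt^{\ge k}$ means $(k-1)$-connected and $B^kH:\Type_\pt^{\le n+k}$ means $(n+k)$-truncated) matches the paper's conventions.
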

\begin{corollary}
  The type $(n,k)\GType$ is $(n+1)$-truncated.
  \leanref{is\_\allowbreak trunc\_\allowbreak GType}
\end{corollary}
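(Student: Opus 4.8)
The plan is to use the last displayed equivalence in the definition of $(n,k)\GType$, namely $(n,k)\GType \equiv \Type_\pt^{\ge k,\le n+k}$ (this is \leanref{GType\_equiv}, which we may assume). So it suffices to show that the type of pointed types that are simultaneously $(k-1)$-connected and $(n+k)$-truncated is $(n+1)$-truncated. Since truncatedness is invariant under equivalence, this reduces the problem to a statement purely about $\Type_\pt^{\ge k,\le n+k}$.

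First I would observe that $\Type_\pt^{\ge k,\le n+k}$ is a subtype of $\Type_\pt^{\le n+k}$, cut out by the property $\isconn_k$, which is a mere proposition; hence the obvious forgetful map $\Type_\pt^{\ge k,\le n+k} \to \Type_\pt^{\le n+k}$ is an embedding, and it therefore suffices to show that $\Type_\pt^{\le m}$ is $(n+1)$-truncated for $m := n+k \geq k \geq n+1$ — in particular for $m \geq n+1$. Next, $\Type_\pt^{\le m} = (A : \Type) \times \istrunc_m A \times (\pt : A)$, and since $\istrunc_m A$ is a proposition, this is equivalent as a type to $(A : \Type^{\le m}) \times A$, i.e. the total space of the universal family over $\Type^{\le m}$. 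So the key input is: $\Type^{\le m}$, the universe of $m$-truncated types, is itself $(m+1)$-truncated. This is standard (it follows from univalence: the identity type $(A =_{\Type^{\le m}} B)$ is equivalent to $A \simeq B$, which is a subtype of $A \to B$; if $A,B$ are $m$-truncated then $A \to B$ is $m$-truncated, so $A \simeq B$ is $m$-truncated, giving $\istrunc_{m+1}\Type^{\le m}$ — cf.\ \cite[Theorem 7.1.11]{TheBook}). Then the total space $(A : \Type^{\le m}) \times A$: for a $\Sigma$-type whose base is $(m+1)$-truncated and whose fibers are $m$-truncated (hence $(m+1)$-truncated), the total space is $(m+1)$-truncated; and since $m \geq n+1$, a fortiori $\Type_\pt^{\le m}$ is $(n+1)$-truncated?

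Wait — that last monotonicity goes the wrong way: $m+1 \geq n+2 > n+1$, so being $(m+1)$-truncated does not give $(n+1)$-truncated. Let me correct the argument: the point is that the fibers $A$ of the total space over $\Type^{\le m}$ are only $m$-truncated, but what we actually need is finer. The honest approach is the standard one used for such "universe of structured types is $(n+1)$-truncated" statements: compute the identity type of $(n,k)\GType$ directly. Given $G, H : (n,k)\GType$ with underlying pointed types $G_0, H_0$ and deloopings $B^kG, B^kH$, by univalence and the characterization of paths in $\Sigma$-types the identity type $G = H$ is equivalent to the type of pointed equivalences $B^kG \simeq_\pt B^kH$ (the other components being propositional or determined). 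Since $B^kG$ is $(k-1)$-connected and $B^kH$ is $(n+k)$-truncated, the Corollary just proved shows that $B^kG \to_\pt B^kH$ is $n$-truncated, hence so is its subtype of pointed equivalences, so $G = H$ is $n$-truncated; therefore $(n,k)\GType$ is $(n+1)$-truncated.

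The main obstacle is bookkeeping rather than mathematics: one must carefully reduce the identity type $G =_{(n,k)\GType} H$ — which a priori involves identifications of the carrier, of the $k$-fold delooping, and of the coherence $G \equiv_\pt \Omega^k B^k G$ — to just a pointed equivalence $B^kG \simeq_\pt B^kH$, using that $\Type_\pt^{\le n}$-identities and $\Type_\pt^{\ge k}$-identities are pointed equivalences, that $\Omega^k$ is functorial on these, and that the remaining coherence data lives in a contractible total space (it is essentially a based-path-induction / "singleton contractibility" argument). Once that reduction is in place, the Corollary on pointed maps $X \to_\pt Y$ for $X$ connected and $Y$ truncated does all the real work, and truncatedness of a subtype (pointed equivalences inside pointed maps) and invariance of truncatedness under equivalence finish the proof.
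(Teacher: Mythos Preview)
Your final argument is correct and is essentially the paper's own proof: identify $G =_{(n,k)\GType} H$ with the type of pointed equivalences $B^kG \equiv_\pt B^kH$, observe this is a subtype of $B^kG \to_\pt B^kH \equiv \hom_{(n,k)}(G,H)$, and invoke the preceding corollary that this hom-type is $n$-truncated. The first half of your write-up (embedding into $\Type_\pt^{\le m}$ and using that $\Type^{\le m}$ is $(m+1)$-truncated) is a dead end, as you yourself notice; you might simply delete that false start rather than leave it in the proof.
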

\begin{proof}
  This follows immediately from the preceding corollary, as the type
  of equivalences $G \equiv H$ is a subtype of the homomorphisms from
  $G$ to $H$.
\end{proof}

If $k\ge n+2$ (so we're in the stable range), then $\hom_{(n,k)}(G,H)$
becomes a stably groupal $n$-groupoid. This generalizes the
fact that the homomorphisms between abelian groups form an abelian
group.

The automorphism group $\Aut G$ of a higher group $G:(n,k)\GType$ is in
$(n,1)\GType$.
This is equivalently the automorphism group of the pointed type $B^kG$.
But we can also forget the basepoint and consider the
automorphism group $\Aut^c G$ of $B^kG:\Type^{\ge k,\le n+k}$. This
now allows for (higher) conjugations. We define the \emph{generalized
  center} of $G$ to be $ZG := \Omega^k\!\Aut^c G : (n,k+1)\GType$
(generalizing the center of a set-level group,
see below in~\autoref{sec:center}).

\subsection{Group actions}
\label{sec:actions}

In this section we consider a fixed group $G : \GType$ with delooping
$BG$. An \emph{action} of $G$ on some object of type $A$ is simply
a function $X : BG \to A$. The object of the action is $X(\pt) : A$,
and it can be convenient to consider evaluation at $\pt : BG$ to be a
coercion from actions of type $A$ to $A$. To equip $a : A$ with a
$G$-action is to give an action $X : BG \to A$ with $X(\pt)=a$. The
\emph{trivial action} is the constant function at $a$. Clearly, an
action of $G$ on $a:A$ is the same as a homomorphism $G \to \Aut_A a$.

If $A$ is a universe of types, then we have actions on types $X : BG
\to \Type$. These $G$-types are thus simply types in the context of
$BG$. A map of $G$-types from $X$ to $Y$ is just a function $\alpha :
(z : BG) \to X(z) \to Y(z)$.

If $X$ is a $G$-type, then we can form the
\begin{description}
\item[invariants] $X^{hG} := (z : BG) \to X(z)$, also known as the
  \emph{homotopy fixed points}, and the
\item[coinvariants] $X_{hG} := (z : BG) \times X(z)$, which is also known as
  \emph{homotopy orbit space} or the \emph{homotopy quotient}
  $X \dblslash G$.
\end{description}
It is easy to see that these constructions are respectively the right and left
adjoints of the functor that sends a type $X$ to the trivial
$G$-action on $X$, $X^{\mathrm{triv}} : BG \to \Type$, which is just
the constant family at $X$. Indeed, the adjunctions are just the usual argument
swap and (un)currying equivalences, for $Y:\Type$,
\begin{align*}
  \hom(Y, X^{hG})
  &= X \to (z : BG) \to Y(z)
    \equiv (z : BG) \to X \to Y(z) \\
  &\equiv \hom(X^{\mathrm{triv}}, Y), \\
  \hom(X_{hG}, Y)
  &= \big((z : BG) \times X(z)\bigr) \to Y
    \equiv (z : BG) \to X(z) \to Y \\
  &\equiv \hom(X, Y^{\mathrm{triv}}).
\end{align*}
If we think of an action $X : BG\to \Type$
as a type-valued diagram on $BG$, this means that the homotopy fixed
points and the homotopy orbit space form the homotopy limit and
homotopy colimit of this diagram, respectively.

\begin{proposition}
  \label{prop:action-hom-pullback}
  Let $f : H \to G$ be a homomorphism of higher groups with delooping
  $Bf : BH \to_\pt BG$, and let $\alpha : \hom(X,Y)$ be a map of
  $G$-types. By composing with $f$ we can also view $X$ and $Y$ as
  $H$-types, in which case we get a homotopy pullback square:
  \[
    \begin{tikzcd}
      X_{hH} \ar[r]\ar[d] & Y_{hH}\ar[d] \\
      X_{hG} \ar[r]       & Y_{hG}.
    \end{tikzcd}
  \]
\end{proposition}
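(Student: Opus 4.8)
The plan is to recognize the square as an instance of the generalized descent theorem stated earlier, applied to the map $Bf : BH \to_\pt BG$. First I would recall the key computational fact about homotopy orbits: for a $G$-type $X : BG \to \Type$, the projection $X_{hG} \to BG$ (forgetting the second component of the $\Sigma$-type) is a fibration whose fiber over $z : BG$ is exactly $X(z)$; in particular the fiber over the basepoint is the object $X(\pt)$ of the action. Moreover, when we restrict the action along $f$, viewing $X$ as an $H$-type via $X \circ Bf : BH \to \Type$, the total space $X_{hH} = (w : BH) \times X(Bf\, w)$ is precisely the pullback $BH \times_{BG} X_{hG}$, since an element is a point $w : BH$ together with a point of the fiber of $X_{hG} \to BG$ over $Bf\, w$. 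The same holds for $Y$: $Y_{hH} \equiv BH \times_{BG} Y_{hG}$.

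Next I would set up the commuting cube. The back face is the square whose four corners are $X_{hH}, Y_{hH}$ on top and $X_{hG}, Y_{hG}$ on the bottom — that is, the square we want to show is a pullback. The front face is the constant square on $BH$ over $BG$, i.e.\ with corners $BH, BH, BG, BG$ (the two vertical maps both being $Bf$, the horizontal maps being identities). The four connecting maps from back to front are the total-space projections $X_{hH} \to BH$, $Y_{hH} \to BH$, $X_{hG} \to BG$, $Y_{hG} \to BG$. The two side squares of this cube express exactly the pullback identifications $X_{hH} \equiv BH \times_{BG} X_{hG}$ and $Y_{hH} \equiv BH \times_{BG} Y_{hG}$ established in the previous paragraph, so they are pullback squares. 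By the pasting lemma for pullbacks, the composite square with corners $X_{hH}, Y_{hH}$ on top and $BG, BG$ on the bottom is a pullback iff the top (back) square is, once we know the bottom-left square (side face) is a pullback — more directly, since the left and right side faces are pullbacks and the goal square sits as the back face over these, the descent/pasting argument applies. Concretely: $X_{hH} \to Y_{hH}$ is the map induced on pullbacks $BH \times_{BG} X_{hG} \to BH \times_{BG} Y_{hG}$ from $\alpha$, and base change preserves pullbacks, so the square is a pullback by the pasting lemma applied to the left and front faces of the cube.

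The cleanest way to phrase this: the square in question factors as
\[
\begin{tikzcd}
X_{hH} \ar[r]\ar[d] & Y_{hH}\ar[d]\ar[r] & BH \ar[d,"Bf"] \\
X_{hG} \ar[r]       & Y_{hG}\ar[r]       & BG,
\end{tikzcd}
\]
where the right-hand square is a pullback (it is the side face of the cube, expressing $Y_{hH} \equiv BH \times_{BG} Y_{hG}$) and the outer rectangle is a pullback (it expresses $X_{hH} \equiv BH \times_{BG} X_{hG}$, composed with $Y_{hG} \to BG$, but equally $X_{hH} \to BH$ composed; one checks the outer rectangle computes $(w : BH) \times X(Bf\,w)$ pulled back appropriately). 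By the pasting lemma for pullbacks, the left-hand square — which is the one we want — is then a pullback. The main obstacle, and the only place requiring care, is verifying the fibrancy/pullback identification $X_{hH} \equiv BH \times_{BG} X_{hG}$ at the level of the $\Sigma$-types: one must check that the canonical map sending $(w, x)$ with $w : BH$, $x : X(Bf\,w)$ to the pair $\bigl(w, (Bf\,w, x)\bigr)$ in $BH \times_{BG} X_{hG}$ is an equivalence, which is a routine but slightly fiddly manipulation of iterated $\Sigma$-types (it is essentially associativity of $\Sigma$ together with the fact that $(z : BG) \times (Bf\,w = z) \times X(z)$ contracts the middle path component). Everything else is formal from the pasting lemma, with no need to invoke connectivity or truncation hypotheses.
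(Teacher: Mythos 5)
Your argument is correct, but it takes a different route from the paper. The paper proves the proposition by directly computing the homotopy pullback corner of the cospan $X_{hG}\to Y_{hG}\leftarrow Y_{hH}$ as an iterated $\Sigma$-type and contracting singletons until it reduces to $(w:BH)\times X(Bf\,w)=X_{hH}$, then checking the induced maps are the canonical ones. You instead factor the square through the projections to $BH$ and $BG$, observe that the right-hand square and the outer rectangle are pullbacks (both being instances of the identification $Z_{hH}\equiv BH\times_{BG}Z_{hG}$), and conclude by the cancellation direction of the pasting lemma. Both reduce to the same underlying $\Sigma$-type manipulation, but your decomposition isolates the one nontrivial identification and makes the role of $\alpha$ purely formal, whereas the paper's single computation handles $X$, $Y$, and $\alpha$ at once. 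Your approach is arguably more modular; the paper's is shorter. One small correction: the opening of your plan invokes the ``generalized descent theorem,'' but that theorem in the paper concerns cubes over pushout squares and plays no role here --- your actual argument uses only the pasting lemma for pullbacks, which is all that is needed. You are also right that no connectivity or truncation hypotheses enter.
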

\begin{proof}
  The vertical maps are induced by $Bf$, and the horizontal maps are
  induced by $\alpha$. The homotopy pullback corner type $C$ is calculated as
  \begin{align*}
    C
    &\equiv (z : BG) \times (x : X\,z)
      \times (w : BH) \times (y : Y(Bf\,w)) \\
    &\qquad \times (z = Bf\,w) \times (y = \alpha\,z\,x) \\
    &\equiv (w : BH) \times (x : X(Bf\,w))
      = X_{hH},
  \end{align*}
  and under this equivalence the top and the left maps are the
  canonical ones.
\end{proof}

Every group $G$ carries two canonical actions on itself:
\begin{description}
\item[the right action] $G : BG \to \Type$, $G(x) = (\pt = x)$, and the
\item[the adjoint action] $G^{\mathrm{ad}} : BG \to \Type$,
  $G^{\mathrm{ad}}(x) = (x = x)$ (by conjugation).
\end{description}

We have $1 \dblslash G = BG$,
$G \dblslash G = 1$ and $G^{\mathrm{ad}} \dblslash G = LBG :=
(\bS^1 \to BG)$, the free loop space of $BG$. Recalling that $\BZ =
\bS^1$, we see that $G^{\mathrm{ad}} = (\BZ \to BG)$, i.e., the
conjugacy classes of homomorphisms from $\bZ$ to $G$. Since the
integers are the free (higher) group on one generator, this is just
the conjugacy classes of elements of $G$. But that is exactly what we
should get for the homotopy orbits of $G$ under the conjugation
action.

The above proposition has an interesting corollary:
\begin{corollary}
  \label{cor:hfiber-hom}
  If $f : H \to G$ is a homomorphism of higher groups,
  then $G \dblslash H$ is equivalent to the homotopy
  fiber of the delooping $Bf : BH \to_\pt BG$,
  where $H$ acts on $G$ via the $f$-induced right action.
\end{corollary}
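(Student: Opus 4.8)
The plan is to apply Proposition~\ref{prop:action-hom-pullback} to the homomorphism $f : H \to G$ together with the unique map of $G$-types $\alpha : \hom(G, 1)$, where $G$ denotes the right action of $G$ on itself (so $G(x) = (\pt = x)$) and $1$ denotes the trivial action on the unit type. Since $1$ is both the trivial $G$-action and the trivial $H$-action, we have $1_{hG} = BG$ and $1_{hH} = BH$, and the horizontal maps in the resulting square are (up to identification) $Bf$ on the bottom and its pullback on top. Thus Proposition~\ref{prop:action-hom-pullback} gives a homotopy pullback square
\[
\begin{tikzcd}
G_{hH} \ar[r]\ar[d] & BH \ar[d] \\
G_{hG} \ar[r] & BG,
\end{tikzcd}
\]
where $G$ on the left-hand side is regarded as an $H$-type by restricting the right action along $f$.

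Next I would use the already-established computation $G \dblslash G = 1$, i.e., that $G_{hG} = (z : BG) \times (\pt = z)$ is contractible (it is the based path space of $BG$). Under this identification the bottom map $G_{hG} \to BG$ is the map from a contractible type sending the center to $\pt : BG$. A homotopy pullback of $BH \to BG$ along a pointed map out of a contractible type is exactly the homotopy fiber of that map over $\pt$; so $G_{hH}$, which is $G \dblslash H$ with $H$ acting via the $f$-induced right action, is equivalent to $\fib(Bf)$. This is the claimed equivalence.

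The only points requiring care are bookkeeping ones: first, checking that the map of $G$-types $\alpha : \hom(G, 1)$ really does restrict along $f$ to the corresponding map of $H$-types, which is immediate since both are the constant family at the map to $1$; and second, pinning down the identification of $1_{hG}$ with $BG$ and of $1_{hH}$ with $BH$ so that the induced maps match $Bf$ on the nose — this is the curry/swap identification already used implicitly in Section~\ref{sec:actions}. I do not expect a genuine obstacle here; the corollary is essentially a substitution of specific actions into Proposition~\ref{prop:action-hom-pullback} combined with the contractibility of $G \dblslash G$, and the mild care needed is in tracking basepoints so that ``homotopy pullback along a point'' is correctly recognized as ``homotopy fiber.''
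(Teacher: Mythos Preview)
Your proposal is correct and follows essentially the same approach as the paper: apply Proposition~\ref{prop:action-hom-pullback} with $\alpha : G \to 1$ the canonical map from the right action to the trivial action, then use $G \dblslash G \simeq 1$ to identify the resulting pullback square as exhibiting $G \dblslash H$ as the homotopy fiber of $Bf$. The paper's proof is terser but otherwise identical.
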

\begin{proof}
  We apply Proposition~\ref{prop:action-hom-pullback} with $\alpha : G \to 1$
  being the canonical map from the right action of $G$ to the action
  of $G$ on the unit type. Then the square becomes:
  \[
    \begin{tikzcd}[baseline=(O.base)]
      G \dblslash H \ar[r]\ar[d] & BH\ar[d] \\
      |[alias=O]| 1 \ar[r]       & BG
    \end{tikzcd}\qedhere
  \]
\end{proof}

By definition, $BG$ classifies \emph{principal $G$-bundles}: pullbacks
of the right action of $G$. That is, a principal $G$-bundle over a
type $A$ is a family $F : X \to \Type$ represented by a map $\chi:A
\to BG$ such that $F(x) \equiv (\pt = \chi(x))$ for all $x : X$.

For example, for every higher group $G$ we have the corresponding Hopf
fibration $\Sigma G \to \Type$ represented by the map $\chi_H : \Sigma
G \to BG$ corresponding under the loop-suspension adjunction to the
identity map on $G$. (This particular fibration can be defined using
only the induced $H$-space structure on $G$.)

This perspective underlies the construction of the first and the third
named author of the real projective spaces in homotopy type
theory~\cite{realprojective}. The fiber sequences $\bS^0 \to \bS^n \to
\RP^n$ are principal bundles for the $2$-elements group
$\bS^0=\Sym_2$ with delooping $\BS_2\equiv \RP^\infty$,
the type of $2$-element types.

\subsection{Back to the center}
\label{sec:center}

We mentioned the generalized center above and claimed that it
generalized the usual notion of the center of a group.
Indeed, if $G:1\Grp$ is a set-level group, then an element of
$ZG$ corresponds to an element of $\Omega^2\BAut^cG$, or equivalently,
a map from the $2$-sphere $\bS^2$ to $\Type$
sending the basepoint to $BG$.
By the universal property of
$\bS^2$ as a HIT, this again corresponds to a homotopy from the
identity on $BG$ to itself, $c : (z : BG) \to z = z$.
This is precisely a homotopy fixed point of the adjoint action of
$G$ on itself, i.e., a central element.

\subsection{Equivariant homotopy theory}
\label{sec:equivariant}

Fix a group $G : \GType$. Suppose that $G$ is actually the (homotopy)
type of a topological group. Consider the
type $BG \to \Type$ of (small) \emph{types with a $G$-action}. Naively,
one might think that this represents $G$-equivariant homotopy types,
i.e., sufficiently nice\footnote{Sufficiently nice means the
  $G$-CW-spaces. The same homotopy category arises by taking all
  spaces with a $G$-action, but then the weak equivalences are the
  $G$-maps $f : X \to Y$ that induce weak equivalences on $H$-fixed
  point spaces $f^H : X^H \to Y^H$ for all closed subgroups $H$ of $G$.}
topological spaces with a $G$-action considered up to $G$-equivariant
homotopy equivalence. But this is not so.

By Elmendorf's theorem~\cite{Elmendorf1983}, this homotopy theory is
rather that of presheaves of (ordinary) homotopy types on the
\emph{orbit category} $\mathcal{O}_G$ of $G$. This is the full
subcategory of the category of $G$-spaces spanned by the homogeneous
spaces $G/H$, where $H$ ranges over the closed subgroups of $G$.

Inside the orbit category we find a copy of the group $G$, namely as
the endomorphisms of the object $G/1$ corresponding to the trivial
subgroup $1$. Hence, a $G$-equivariant homotopy type gives rise to
type with a $G$-action by restriction along the inclusion $BG
\hookrightarrow \mathcal{O}_G$. (Here we consider $BG$ as a (pointed
and connected) topological groupoid on one object.)

As remarked by Shulman~\cite{ShulmanEI}, when $G$ is a \emph{compact} Lie
group, then $\mathcal{O}_G$ is an inverse EI $\infty$-category, and
hence we know how to model type theory in the presheaf $\infty$-topos
over $\mathcal{O}_G$. And in certain simple cases we can even define
this model internally. For instance, if $G=\bZ/p\bZ$ is a cyclic group
of prime order, then a small $G$-equivariant type consists of a type with a
$G$-action, $X : BG \to \Type$ together with another type family $X^G : X^{hG}
\to \Type$, where $X^G$ gives for each homotopy fixed point a type of
proofs or ``special reasons'' why that point should be considered
fixed~\cite[7.6]{ShulmanEI}. Hence the total space of $X^G$ is the
type of actual fixed points,
and the projection to $X^{hG}$ implements the map from actual fixed
points to homotopy fixed points.

Even without going to the orbit category, we can say something about
topological groups through their classifying types in type theory. For
example~\cite{Camarena}, if $f : H \to G$ is injective, then the
homotopy fiber of $Bf$ is by Corollary~\ref{cor:hfiber-hom}
is the homotopy orbit space $G \dblslash H$, which in this case is
just the coset space $G/H$, and hence in type
theory represents the homotopy type of this coset space. And if
\[
  1 \to K \to G \to H \to 1
\]
is a short exact sequence of topological groups,
then $BK \to BG \to BH$ is a fibration sequence,
i.e., we can recover the delooping $BK$ of $K$ as the homotopy fiber
of the map $BG \to BH$.

\subsection{Some elementary constructions}
\label{sec:elementary-constructions}

If we are given a homomorphism $\varphi : H \to \Aut(N)$, represented
by a pointed map
$B\varphi : BH \to_\pt \BAut_\pt(BN)$ where $\BAut_\pt(BN)$ is the
type of pointed types merely equivalent to $BN$,
we can build a new group, the
\emph{semidirect product}, $G := H \ltimes_\varphi N$
with classifying type $BG := (z : BH) \times (B\varphi\,z)$.
The type $BG$ is indeed pointed (by the pair of the basepoint $\pt$ in $BG$
and the basepoint in the pointed type $B\varphi(\pt)$), and
connected, and hence presents a higher group $G$.
An element of $g$ is given by a pair of an element $h : H$ and an
identification $g\cdot\pt = \pt$ in $B\varphi(\pt) \equiv_\pt BN$. But
since the action is via pointed maps, the second component is
equivalently an identification $\pt = \pt$ in $BN$, i.e., an element of
$N$. Under this equivalence, the product of $(h,n)$ and $(h',n')$ is
indeed $(h\cdot h', n\cdot \varphi(h)(n'))$.

As a special case we obtain the \emph{direct product} when $\varphi$
is the trivial action. Here, $B(H \times N) \equiv BH \times BN$.

As another special case we obtain the \emph{wreath products} $N \wr
\Sym_n$ of a group $N$ and a symmetric group $\Sym_n$. Here,
$\Sym_n$ acts on the direct power $N^{\Fin n}$ by permuting the
factors. Indeed, using the representation of $\BS_n$ as the type of
$n$-element types, the map $B\varphi$ is simply $A \mapsto (A \to
BN)$. Hence the delooping of the wreath product $G := N \wr \Sym_n$
is just $BG:=(A:BS_n) \times (A \to BN)$.

\section{Set-level groups}
\label{sec:n=0}
In this section we give a proof that the $n=0$ column of~\autoref{tab:periodic} is correct. 
Note that for $n=0$ the hom-types $\hom_{(0,k)}(G,H)$ are sets, which means that $(0,k)\GType$ forms a 1-category. 
Let $\Group$ be the category of ordinary set-level groups (a set with multiplication, inverse and unit satisfying the group laws) and $\AbGroup$ the category of abelian groups.
\begin{theorem}
  We have the following equivalences of categories
  {\normalfont(}\/for $k\geq2${\normalfont):}
  \begin{align*}
    (0,1)\GType &\equiv \Group;&\text{\leanref{cGType\_equivalence\_Grp}}&\\
    (0,k)\GType &\equiv \AbGroup.&\text{\leanref{cGType\_equivalence\_AbGrp}}&
  \end{align*}
\end{theorem}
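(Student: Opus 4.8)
The plan is to establish the two equivalences by exhibiting mutually inverse functors and checking they are equivalences on objects and hom-sets. For the first equivalence, I would construct a functor $(0,1)\GType \to \Group$ that sends a higher group $G$ (equivalently, a pointed connected $1$-type $BG$, by the displayed equivalence $n\Grp \equiv \Type_\pt^{>0,\le n}$) to the set $\Omega BG = \pi_1 BG$ with its path-concatenation group structure; this is a set-level group in the classical sense since $BG$ is $1$-truncated. In the other direction, I would send an ordinary group $G$ to its Eilenberg--MacLane space $K(G,1) : \Type_\pt^{>0,\le 1}$, using the HIT construction recalled before \autoref{thm:EM-spaces}. That these are mutually inverse follows from \autoref{thm:EM-spaces}: it gives $\pi_1 K(G,1) \equiv G$ as groups, and conversely, for a pointed connected $1$-type $BG$ we have $K(\pi_1 BG, 1) \equiv_\pt BG$ because both are pointed connected $1$-types with the same fundamental group (this is essentially the uniqueness clause for $K(-,1)$, which can be derived from the universal property of the HIT). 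On morphisms, a homomorphism in $\hom_{(0,1)}(G,H)$ is by definition (and the displayed simplification) the same as a pointed map $BG \to_\pt BH$, and taking $\Omega$ of such a map recovers exactly a group homomorphism $\Omega BG \to \Omega BH$; functoriality in both directions and compatibility with composition is then routine, using that $\hom_{(0,1)}(G,H)$ is a set (noted at the start of the section) so there are no coherence conditions to check beyond the equalities of functions.

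For the second equivalence, the key extra input is the Eckmann--Hilton argument: when $k\ge 2$, the group $\Omega^k B^k G$ is abelian, so the functor $(0,k)\GType \to \AbGroup$ sending $G \mapsto \Omega^k B^k G$ lands in abelian groups. In the reverse direction I would send an abelian group $A$ to the tower $(K(A, k+i))_{i}$ — more precisely, using the definition $(n,k)\GType \equiv \Type_\pt^{\ge k,\le n+k}$, it suffices to produce the single pointed space $B^k G := K(A,k) : \Type_\pt^{\ge k,\le k}$, which is $(k-1)$-connected and $k$-truncated by \autoref{thm:EM-spaces}. (Here I am using the equivalence $(n,k)\GType \equiv \Type_\pt^{\ge k,\le n+k}$ stated in the excerpt, so an object of $(0,k)\GType$ is the same data as a $(k-1)$-connected, $k$-truncated pointed type.) Again \autoref{thm:EM-spaces} gives $\pi_k K(A,k) \equiv A$, and conversely any $(k-1)$-connected $k$-truncated pointed type $Y$ satisfies $Y \equiv_\pt K(\pi_k Y, k)$ by uniqueness of Eilenberg--MacLane spaces. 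On morphisms, $\hom_{(0,k)}(G,H) \equiv (B^k h : B^k G \to_\pt B^k H)$, and since $B^k G$ is $(k-1)$-connected and $B^k H$ is $k$-truncated, such pointed maps correspond bijectively to group homomorphisms $\pi_k B^k G \to \pi_k B^k H$ — this is again a consequence of the universal property / connectivity, the same fact underlying uniqueness of $K(A,k)$.

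The main obstacle, and the step deserving the most care, is the \emph{uniqueness} of Eilenberg--MacLane spaces: that a $(k-1)$-connected, $k$-truncated pointed type is determined up to pointed equivalence by its $k$-th homotopy group, \emph{naturally} in maps, so that the assignment genuinely defines a functor and an inverse equivalence rather than just an object-wise bijection. Concretely I expect to argue: for $k=1$, use the universal property of $K(G,1)$ as a HIT to define, for any pointed connected $1$-type $Y$ with a chosen iso $G \equiv \pi_1 Y$, a pointed map $K(G,1) \to Y$, and check it is an equivalence by checking it induces an iso on $\pi_1$ (then apply the Whitehead-type principle for $1$-types); for $k\ge 2$, the analogous statement follows by an induction delooping the $k=1$ case, or directly from the fact that $\trunc{\Sigma^j X \to_\pt Y}_0 = 0$ in the relevant range (the computation already sketched in \autoref{sec:homomorphisms}), which forces the space of pointed maps between such EM spaces to be discrete with $\pi_0$ the set of group homomorphisms. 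Everything else — that the functors respect identities and composition, and that the two composites are naturally isomorphic to the identity functors — is then formal, using that all the hom-types in sight are sets.
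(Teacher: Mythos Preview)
Your proposal is correct and follows essentially the same approach as the paper: both construct the functors $\pi_k$ and $K(\blank,k)$, use \autoref{thm:EM-spaces} for one composite, and invoke Whitehead's theorem for truncated types to establish uniqueness of Eilenberg--MacLane spaces for the other. The paper makes explicit the inductive construction of the comparison map $e_\phi^k : K(G,k) \to_\pt X$ for $k>1$ via the loop--suspension adjunction and truncation elimination, which is precisely the ``induction delooping the $k=1$ case'' you allude to; note, however, that your second alternative (the $\pi_j$ computation from \autoref{sec:homomorphisms}) only shows the hom-space is a set and does not by itself identify its $\pi_0$ with the set of group homomorphisms---that identification still requires the construction.
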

Since this theorem has been formalized we will not give all details of the proof.
\begin{proof}
  Let $k\ge1$ and $G$ be a group which is abelian if $k>1$ and let $X:\Type_\pt^{\ge k,\le k}$. If we have a group homomorphism $\phi : G \to \Omega^k X$ we get a map $e_\phi^k:K(G,k)\to_\pt X$. For $k=1$ this follows directly from the induction principle of $K(G,1)$. For $k>1$ we can define the group homomorphism $\widetilde\phi$ as the composite $G \xrightarrow{\phi} \Omega^k X \equiv \Omega^{k-1}(\Omega X)$, and apply the induction hypothesis to get a map
  $e_{\widetilde\phi}^{k-1}:K(G,k-1)\to_\pt \Omega X$. By the adjunction $\Sigma\dashv\Omega$ we get a pointed map $\Sigma K(G,k-1)\to_\pt X$, and by the elimination principle of the truncation we get a map $K(G,k)=\trunc{\Sigma K(G,k-1)}_{k}\to_\pt X$. 

  We can now show that $\Omega^k e_\phi^k$ is the expected map, that is, the following diagram commutes, but we omit this proof here.
  \begin{center}
    \begin{tikzpicture}[auto,node distance=2cm,
        thick,main node/.style={font=\sffamily\bfseries},text height=1.5ex]
      \node[main node] (OK) at (0,0) {$\Omega^nK(G,k)$};
      \node[main node] (G)  at (3,0) {$G$};
      \node[main node] (OX) at (1.5,-1.5) {$\Omega^n X$};
           \path[every node/.style={font=\sffamily\small}]
           (OK) edge [->] node {$\sim$} (G)
                edge [->] node [below left] {$\Omega^n e_\phi^k$} (OX)
           (OX) edge [->] node [below right] {$\phi$} (G);
    \end{tikzpicture}
  \end{center}
  Now if $\phi$ is a group isomorphism, by Whitehead's Theorem for truncated types \cite[Theorem 8.8.3]{TheBook} we know that $e_\phi^k$ is an equivalence, since it induces an equivalence on all homotopy groups (trivially on the levels other than $k$). We can also show that $e_\phi^k$ is natural in $\phi$.

  Note that if we have a group homomorphism $\psi:G\to G'$, we also get a group homomorphism $G\to\Omega^k K(G',k)$, and by the above construction we get a pointed map $K(\psi,k):K(G,k)\to_\pt K(G',k)$. This is functorial, which follows from naturality of $e_\phi^k$. 

  Finally, we can construct the equivalence explicitly. We have a functor
  $\pi_k:(0,k)\GType \to \AbGroup$ which sends $G$ to $\pi_k BG$. Conversely, we have the functor $K({-},k):\AbGroup\to (0,k)\GType$. We have natural isomorphisms
  $\pi_k K(G,k)\equiv G$ by~\autoref{thm:EM-spaces} and $K(\pi_k X,k)\equiv_\pt X$ by the application of Whitehead described above. The construction is exactly the same for $k=1$ after replacing $\AbGroup$ by $\Group$.
\end{proof}

\section{Stabilization}
\label{sec:stabilization}

In this section we discuss some constructions with higher groups~\cite{BaezDolan1998}. We will give the actions on the carriers and the deloopings, but we omit the third component, the pointed equivalence, for readability. We recommend keeping \autoref{tab:periodic} in mind during these constructions.
\begin{description}
\item[decategorification] $\Decat : (n,k)\GType \to (n-1,k)\GType$\\
  $\angled{G,B^kG} \mapsto \angled{\trunc G_{n-1}, \trunc{B^kG}_{n+k-1}}$
\item[discrete categorification] $\Disc : (n,k)\GType \to (n+1,k)\GType$ \\
  $\angled{G,B^kG} \mapsto \angled{G,B^kG}$
\end{description}
These functors make $(n,k)\GType$ a reflective sub-$(\infty,1)$-category of $(n+1,k)\GType$. That is, there is an adjunction ${\Decat} \dashv {\Disc}$ \leanref{Decat\_adjoint\_Disc}\footnote{In the formalization the naturality of the adjunction is a separate statement, \leanref{Decat\_adjoint\_Disc\_natural}. This is also true for the other adjunctions.} such that the counit induces an isomorphism ${\Decat} \circ {\Disc} = \id$ \leanref{Decat\_Disc}. These properties are straightforward consequences of the universal property of truncation.

There are also iterated versions of these functors.
\begin{description}
  \item[$\infty$-decategorification] $\iDecat : (\infty,k)\GType \to (n,k)\GType$\\
    $\angled{G,B^kG} \mapsto \angled{\trunc G_{n}, \trunc{B^kG}_{n+k}}$
  \item[discrete $\infty$-categorification] $\iDisc : (n,k)\GType \to (\infty,k)\GType$ \\
    $\angled{G,B^kG} \mapsto \angled{G,B^kG}$
\end{description}
These functors satisfy the same properties: ${\iDecat} \dashv {\iDisc}$ \leanref{InfDecat\_adjoint\_InfDisc} such that the counit induces an isomorphism ${\iDecat} \circ {\iDisc} = \id$ \leanref{InfDecat\_InfDisc}.

For the next constructions, we need the following properties.
\begin{definition}
  For $A : \Type_\pt$ we define the \emph{$n$-connected cover} of $A$ to be 
  $A{\angled n} := \fib(A \to \trunc A_n)$. We have the projection $p_1: A{\angled n} \to_\pt A$.
\end{definition}
\begin{lemma} \label{lem:connected-cover-univ}
  The universal property of the $n$-connected cover states the following. For any $n$-connected pointed type $B$, the pointed map
  $$(B \to_\pt A{\angled n}) \to_\pt (B \to_\pt A),$$
  given by postcomposition with $p_1$, is an equivalence.
  \leanref{connect\_intro\_pequiv}
\end{lemma}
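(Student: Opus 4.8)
The plan is to unfold the definition $A\angled n := \fib(A \to \trunc A_n) = (a : A) \times (\tr a_n = \pt)$ and reduce the claim to a standard fact about mapping into a fiber. First I would recall the general principle: for a pointed map $g : A \to_\pt C$ with fiber $\fib(g) = (a:A)\times(g(a)=\pt)$, and any pointed type $B$, the type of pointed maps $B \to_\pt \fib(g)$ is equivalent to the type of pairs consisting of a pointed map $f : B \to_\pt A$ together with a pointed nullhomotopy of $g \circ f$. Applying this with $g$ the truncation map $\abs\blank_n : A \to \trunc A_n$, a pointed map $B \to_\pt A\angled n$ is the same as a pointed map $f : B \to_\pt A$ equipped with a pointed homotopy $\abs{f(b)}_n = \pt$ for all $b : B$; and postcomposition with $p_1$ is exactly the first-projection map that forgets this extra data. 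So it suffices to show that this extra data is \emph{contractible}, i.e.\ that every pointed map $f : B \to_\pt A$ admits a unique (pointed) nullhomotopy after postcomposing with the truncation map into $\trunc A_n$.

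The key step is then this contractibility claim. Since $B$ is $n$-connected, the composite $\abs\blank_n \circ f : B \to_\pt \trunc A_n$ factors (uniquely up to the relevant homotopies) through $\trunc B_n$, which is contractible; concretely, $\trunc A_n$ is $n$-truncated, so by the universal property of truncation precomposition $(\trunc B_n \to \trunc A_n) \to (B \to \trunc A_n)$ is an equivalence, and $\trunc B_n \simeq 1$ forces the space of maps $B \to \trunc A_n$ to be contractible, with the constant map at $\pt$ as center. Unwinding this in the pointed setting: the type of pointed maps $B \to_\pt \trunc A_n$ is contractible, hence the type of pointed nullhomotopies of the specific pointed map $\abs\blank_n \circ f$ — which is the loop space / identity type based at the constant pointed map in this contractible type of pointed maps — is itself contractible. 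Alternatively, and perhaps more cleanly for the formalization, one invokes the elimination principle for $n$-connected pointed types (\cite[Lemma~7.5.7]{TheBook}): a family of $(n{-}1)$-truncated pointed types over an $n$-connected pointed type has a unique pointed section; here the family $b \mapsto (\abs{f(b)}_n = \pt)$ over $B$ lands in $(n{-}1)$-truncated types because $\trunc A_n$ is $n$-truncated, so its identity types are $(n{-}1)$-truncated, and a pointed section of this family is precisely a pointed nullhomotopy.

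Assembling: the fiber description gives
\[
  (B \to_\pt A\angled n) \equiv (f : B \to_\pt A) \times \bigl((b : B) \to_\pt (\abs{f(b)}_n = \pt)\bigr),
\]
the second factor is contractible by the connectedness argument above, so the projection to $(B \to_\pt A)$ — which is postcomposition with $p_1$ — is an equivalence, and one checks it is pointed (it sends the trivial pointed map to the trivial pointed map). I expect the main obstacle to be purely bookkeeping rather than conceptual: namely tracking the \emph{pointedness} data carefully through the fiber-mapping equivalence and through the $n$-connected elimination principle, since at each stage "pointed homotopy" and "pointed section" carry a basepoint-compatibility component that must be shown to match up, and getting these coherences to line up on the nose is exactly the kind of thing that makes the informal argument short but the formal proof (\leanref{connect\_intro\_pequiv}) somewhat fiddly.
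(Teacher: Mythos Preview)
Your proposal is correct and rests on the same key ingredient as the paper's proof: the family $b \mapsto (\tr{f(b)}_n = \pt)$ over the $n$-connected type $B$ is $(n-1)$-truncated, so the elimination principle for connected types (\cite[Lemma~7.5.7]{TheBook}) produces the required data uniquely. The organization, however, is different. The paper builds an explicit inverse $f \mapsto \widetilde f$ by using that principle to \emph{construct} the section $q$, and then checks the two round-trips separately (one of which is only handled in the formalization). You instead first identify $(B \to_\pt A\angled n)$ with a $\Sigma$-type over $(B \to_\pt A)$ via the general ``pointed maps into a fiber'' equivalence, and then show the second factor is contractible; the equivalence then drops out without needing to verify the harder round-trip by hand. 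Your packaging is a bit slicker for exactly the reason you anticipate at the end: the basepoint coherences that the paper defers to the formalization are absorbed into the single contractibility statement.
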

\begin{proof}
  Given a map $f:B\to_\pt A$, we can form a map $\widetilde f: B \to A{\angled n}$. First note that for $b:B$ the type $\tr{fb}_n=_{\trunc A_n}\tr{\pt}_n$ is $(n-1)$-truncated and inhabited for $b=\pt$. Since $B$ is $n$-connected, the universal property for connected types shows that we can construct a $qb:\tr{fb}_n=\tr{\pt}_n$ for all $b$ such that $q_0:qb_0\cdot\ap_{\tr{\blank}_n}(f_0)=1$. Then we can define the map $\widetilde f(b):=(fb, qb)$. Now $\widetilde f$ is pointed, because $(f_0,q_0):(fb_0,qb_0)=(a_0,1)$.

  Now we show that this is indeed an inverse to the given map. On the one hand, we need to show that if $f: B \to_\pt A$, then $p_1 \circ \widetilde f=f$. The underlying functions are equal because they both send $b$ to $f(b)$. They respect points in the same way, because
  $\ap_{p_1}(\widetilde f_0)=f_0$. The proof that the other composite is the identity follows from a computation using fibers and connectivity, which we omit here, but can be found in the formalization.
\end{proof}
The next reflective sub-$(\infty,1)$-category is formed by looping and delooping.
\begin{description}
\item[looping] $\Omega : (n,k)\GType \to (n-1,k+1)\GType$ \\
  $\angled{G,B^kG} \mapsto \angled{\Omega G,B^kG{\angled k}}$
\item[delooping] $\B : (n,k)\GType \to (n+1,k-1)\GType$ \\
  $\angled{G,B^kG} \mapsto \angled{\Omega^{k-1}B^kG,B^kG}$
\end{description}
We have ${\B} \dashv {\Omega}$ \leanref{Deloop\_adjoint\_Loop}, which follows from Lemma \ref{lem:connected-cover-univ} 
and $\Omega\circ{\B} = \id$ \leanref{Loop\_Deloop}, which follows from the fact that $A{\angled n}=A$ if $A$ is $n$-connected.

The last adjoint pair of functors is given by stabilization and forgetting. This does not form a reflective sub-$(\infty,1)$-category.
\begin{description}
\item[forgetting] $F : (n,k)\GType \to (n,k-1)\GType$ \\
  $\angled{G,B^kG} \mapsto \angled{G,\Omega B^kG}$
\item[stabilization] $S : (n,k)\GType \to (n,k+1)\GType$ \\
  $\angled{G,B^kG} \mapsto \angled{SG,\trunc{\susp B^kG}_{n+k+1}}$,\\
  where $SG = \trunc{\Omega^{k+1}\susp B^kG}_n$
\end{description}
We have the adjunction ${S} \dashv {F}$ \leanref{Stabilize\_adjoint\_Forget} which follows from the suspension-loop adjunction $\Sigma\dashv\Omega$ on pointed types.

The next main goal in this section is the stabilization theorem,
stating that the ditto marks in~\autoref{tab:periodic} are justified.

The following corollary is almost \cite[Lemma~8.6.2]{TheBook}, but
proving this in Book HoTT is a bit tricky. See the
formalization for details.
\begin{lemma}[Wedge connectivity]
  \label{lem:wedge-connectivity}
  If $A : \Type_\pt$ is $n$-connected and $B: \Type_\pt$ is
  $m$-connected, then the map $A \vee B \to A \times B$ is
  $(n+m)$-connected. \leanref{is\_conn\_fun\_prod\_of\_wedge}
\end{lemma}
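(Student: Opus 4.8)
The plan is to reduce the statement to a connectivity claim about a single homotopy fiber, and then compute that fiber using a pushout-decomposition of the cofiber of $A \vee B \to A \times B$. Recall that a map $f : X \to Y$ is $(n+m)$-connected iff all its fibers are $(n+m)$-connected, and (since $A \times B$ is connected) it suffices to check the fiber over the basepoint. So the goal becomes: the fiber of $A \vee B \to A \times B$ over $(\pt,\pt)$ is $(n+m)$-connected. First I would identify this fiber. Using that $A \vee B$ is the pushout $A \leftarrow 1 \to B$, I would analyze the map to $A \times B$ via the two squares whose horizontal maps are the inclusions; the fiber over $(\pt,\pt)$ is built from the fibers $\fib(A \to A \times B)$ and $\fib(B \to A \times B)$ glued over $\fib(1 \to A \times B) = \Omega A \times \Omega B$. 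Concretely, $\fib(A \hookrightarrow A \times B) \equiv \{\pt\} \times \Omega B$ is just $\Omega B$ (up to the appropriate identifications), and symmetrically on the other side, so the fiber of the wedge map is the pushout (join-like) $(\Omega B) \leftarrow (\Omega A \times \Omega B) \to (\Omega A)$.

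The next step is to recognize this pushout. Since $\Omega A \times \Omega B \to \Omega B$ and $\Omega A \times \Omega B \to \Omega A$ are the two projections, the pushout is exactly the join $\Omega A * \Omega B$. So the task reduces to: if $A$ is $n$-connected then $\Omega A$ is $(n-1)$-connected, if $B$ is $m$-connected then $\Omega B$ is $(m-1)$-connected, and therefore $\Omega A * \Omega B$ is $((n-1)+(m-1)+2) = (n+m)$-connected. The join connectivity statement — that the join of an $i$-connected and a $j$-connected type is $(i+j+2)$-connected — is the standard input here (it follows from the pushout/descent machinery and truncation lemmas already available, essentially \cite[Lemma~8.6.2]{TheBook}); given it, we are done. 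The first part (looping drops connectivity by one) is immediate from the definitions of $\isconn$ and $\Omega$.

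The main obstacle, and the reason the lemma is flagged as "a bit tricky" in Book HoTT, is the identification of $\fib(A \vee B \to A \times B)$ with the join $\Omega A * \Omega B$. Doing this carefully requires tracking the pushout structure of $A \vee B$ through the fiber construction: one wants to apply the descent/flattening machinery of \autoref{eq:cube} to the cube whose bottom face is the defining pushout square of $A \vee B$ and whose vertical maps are the fibers of the (cofiber-inclusion-type) maps into $A \times B$, showing the top face is again a pushout. The bookkeeping of basepaths and the coherences in the resulting pushout — making sure the two legs really are the projections out of $\Omega A \times \Omega B$ — is where the genuine work lies; everything downstream is then a direct application of join connectivity and the trivial fact that $\Omega$ shifts connectivity. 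In the formalization this is handled with the flattening lemma as indicated at the end of \autoref{sec:preliminaries}, which is why we defer the details there.
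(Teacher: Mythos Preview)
Your argument is correct and is precisely the alternative proof the paper spells out immediately after stating the lemma: identify the fiber of the wedge inclusion with $\Omega A \ast \Omega B$ via the descent cube over the pushout defining $A\vee B$, then apply join connectivity to conclude $(n-1)+(m-1)+2=n+m$.

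One correction worth making: your citation for join connectivity is wrong, and if taken literally would make the argument circular. \cite[Lemma~8.6.2]{TheBook} \emph{is} (a map-extension form of) the wedge connectivity statement itself --- that is exactly why the paper says the present lemma ``is almost \cite[Lemma~8.6.2]{TheBook}.'' For the connectivity of joins the paper appeals instead to \cite[Theorem~6.8]{joinconstruction}. Relatedly, the ``a bit tricky in Book HoTT'' remark refers to the \emph{direct} 8.6.2-style approach carried out in the formalization, not to the fiber identification; the latter is presented in the paper as the clean alternative, with the descent-cube proof given in full.
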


Let us mention that there is an alternative way to prove the wedge
connectivity lemma: Recall that if $A$ is $n$-connected and $B$ is
$m$-connected, then $A \ast B$ is
$(n+m+2)$-connected~\cite[Theorem~6.8]{joinconstruction}. Hence the
wedge connectivity lemma is also a direct consequence of the following lemma.
\begin{lemma}
Let $A$ and $B$ be pointed types.
The fiber of the wedge inclusion $A\vee B\to A\times B$ is equivalent to
$\Omega{A}\ast\Omega{B}$. 
\end{lemma}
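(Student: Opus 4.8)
The plan is to identify the fiber of $A\vee B\to A\times B$ by presenting the wedge as a pushout and applying the descent/pasting machinery already available in the excerpt. First I would write $A\vee B$ as the pushout of $A\leftarrow 1\to B$ (pointed types glued at the basepoint), so that the wedge inclusion $A\vee B\to A\times B$ sits in a commuting square over $A\times B$. The fiber of this map over the basepoint $(\pt,\pt)$ is what we want to compute, so I would take the fiber of the whole pushout diagram over $(\pt,\pt):A\times B$: pulling back the span $A\to A\times B\leftarrow 1\to B\to A\times B$ along $(\pt,\pt)$, the fiber of $A\hookrightarrow A\times B$ is $\fib(a\mapsto(a,\pt)) \equiv \Omega B$ (since the first coordinate is a fibration with fiber $\pt=_A a$ pulled back to a point, it contracts, leaving $\pt=_B\pt$), and symmetrically the fiber of $B\hookrightarrow A\times B$ is $\Omega A$, while the fiber of $1\to A\times B$ is $\Omega A\times\Omega B$, or better, $\Omega(A\times B)\equiv\Omega A\times\Omega B$.

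Next I would invoke the descent theorem for pushouts (stated in the excerpt, in the generalized pullback form): a pushout is \emph{van Kampen}, so fibers of a map out of a pushout are themselves computed as a pushout of the fibers. Concretely, the fiber of $A\vee B\to A\times B$ over $(\pt,\pt)$ is the pushout of the span obtained by taking fibers of each leg of $A\leftarrow 1\to B$ composed with the inclusion into $A\times B$. That span is $\Omega B \leftarrow \Omega(A\times B) \to \Omega A$, where the two maps are the projections $\Omega A\times\Omega B\to\Omega B$ and $\Omega A\times\Omega B\to\Omega A$ (these are exactly the maps induced on fibers by the two inclusions $1\to A$ and $1\to B$, read off from the identification of the fiber of $1\to A\times B$ with the product loop space). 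The homotopy pushout of the projection span $X\leftarrow X\times Y\to Y$ is, by definition, the join $X\ast Y$. Taking $X=\Omega A$, $Y=\Omega B$ gives $\Omega A\ast\Omega B$, as claimed.

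The main obstacle will be pinning down that the maps in the fiber span really are the two projections $\Omega A\times\Omega B\to\Omega A$ and $\Omega A\times\Omega B\to\Omega B$, rather than, say, a projection composed with some twist or a map factoring through the diagonal; this requires carefully tracking the equivalences $\fib_{(\pt,\pt)}(A\hookrightarrow A\times B)\equiv\Omega B$ and $\fib_{(\pt,\pt)}(1\to A\times B)\equiv\Omega A\times\Omega B$ and checking that the map between them induced by $1\to A$ forgets the $\Omega A$ factor. This is a diagram chase with path algebra that I would keep informal, since the descent theorem does the real work; once the span is correctly identified as a product projection span, recognizing the pushout as the join is immediate from the definition of the join as the pushout of $A\leftarrow A\times B\to B$.
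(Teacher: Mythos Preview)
Your proposal is correct and follows essentially the same approach as the paper: compute the fibers of $A\to A\times B$, $B\to A\times B$, and $1\to A\times B$ over the basepoint as $\Omega B$, $\Omega A$, and $\Omega A\times\Omega B$ respectively, assemble these into a cube over the pushout square defining $A\vee B$, and invoke descent to conclude that the fiber of the wedge inclusion is the pushout $\Omega B\leftarrow \Omega A\times\Omega B\to\Omega A$, i.e., the join. The paper presents this via the cube diagram without spelling out the identification of the induced maps as projections, whereas you flag that verification explicitly; but the argument is the same.
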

\begin{proof}
Note that the fiber of $A\to A\times B$ is $\Omega B$, the fiber of $B\to A\times B$ is $\Omega A$, and of course the fiber of $1\to A\times B$ is $\Omega A\times \Omega B$. We get a commuting cube
\begin{equation*}
\begin{tikzcd}
& \Omega A\times \Omega B \arrow[dl] \arrow[d] \arrow[dr] \\
\Omega B \arrow[d] & 1 \arrow[dl] \arrow[dr] & \Omega A \arrow[dl,crossing over] \arrow[d] \\
A \arrow[dr] & 1 \arrow[d] \arrow[from=ul,crossing over] & B \arrow[dl] \\
& A\times B
\end{tikzcd}
\end{equation*}
in which the vertical squares are pullback squares. 

By the descent theorem for pushouts it now follows that $\Omega A\ast \Omega B$ is the fiber of the wedge inclusion.
\end{proof}

The second main tool we need for the stabilization theorem is:
\begin{theorem}[Freudenthal]
  If $A : \Type_\pt^{>n}$ with $n\ge 0$, then the map
  $A \to \Omega\susp A$ is $2n$-connected.
\end{theorem}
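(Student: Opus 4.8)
The plan is to prove this by the encode--decode method, with the wedge connectivity lemma (Lemma~\ref{lem:wedge-connectivity}) as the crucial input; in essence this is the argument of \cite[\S8.6]{TheBook}. Present $\susp A$ as the pushout of $1 \leftarrow A \to 1$, with the two copies of $1$ giving poles $N,S$ and a path constructor $\mathrm{merid} : A \to (N = S)$; the map under study is $\sigma(a) := \mathrm{merid}(a)\cdot\rev{\mathrm{merid}(\pt)} : A \to \Omega\susp A$. Since a map is $2n$-connected exactly when all of its fibers are $2n$-connected, the strategy is to obtain explicit control of the $2n$-truncated path spaces $N =_{\susp A} y$ by constructing a suitable code fibration over $\susp A$.

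The heart of the proof is to build a family of self-equivalences $\Psi : A \to \bigl(\trunc{A}_{2n} \equiv \trunc{A}_{2n}\bigr)$ satisfying $\Psi(\pt) = \id$ and $\Psi(a)(\tr{\pt}_{2n}) = \tr{a}_{2n}$. This is where $n$-connectedness of $A$ is consumed \emph{twice}: a function $A \times A \to \trunc{A}_{2n}$ into the $2n$-truncated type $\trunc{A}_{2n}$ is, by Lemma~\ref{lem:wedge-connectivity} (which makes $A \vee A \to A \times A$ be $2n$-connected), determined by its restriction along $A \vee A$, and we prescribe it there to be $\tr{\blank}_{2n}$ on each wedge summand (the two prescriptions agree at the basepoint). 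Currying and using that the target is $2n$-truncated yields the underlying maps $\psi(a) : \trunc{A}_{2n} \to \trunc{A}_{2n}$; and since being an equivalence is a proposition that holds at $\psi(\pt) = \id$, it holds at every $\psi(a)$ because $A$ is connected. I expect this construction, together with the coherence data it must satisfy when it is fed into the recursion below, to be the main obstacle --- indeed the paper already flags that carrying Freudenthal out in Book HoTT is delicate.

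With $\Psi$ in hand, define $\mathrm{code} : \susp A \to \Type$ by suspension recursion with $\mathrm{code}(N) = \mathrm{code}(S) := \trunc{A}_{2n}$ and meridian datum $\Psi$; this family is fiberwise $2n$-truncated (true at $N$, hence everywhere, as $\susp A$ is connected). Now run encode--decode: $\mathrm{encode}_y : \trunc{N = y}_{2n} \to \mathrm{code}(y)$ transports $\tr{\pt}_{2n}$ along a path, while $\mathrm{decode}_y : \mathrm{code}(y) \to \trunc{N = y}_{2n}$ sends $\tr{a}_{2n} \mapsto \tr{\mathrm{merid}(a)}_{2n}$ over $S$ and $\tr{a}_{2n} \mapsto \tr{\sigma(a)}_{2n}$ over $N$, the meridian case being supplied by the defining equations of $\Psi$. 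Checking that these are mutually inverse is a routine (if fiddly) combination of path induction, truncation induction, and suspension induction. In particular the canonical map $(N = y) \to \mathrm{code}(y)$, namely $p \mapsto \mathrm{transport}^{\mathrm{code}}(p)(\tr{\pt}_{2n})$, is $2n$-connected for all $y$, being the composite of the $2n$-connected truncation map $(N = y) \to \trunc{N = y}_{2n}$ with the equivalence $\mathrm{encode}_y$.

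Finally, specialize to $y = N$: the map $\Omega\susp A \to \trunc{A}_{2n}$, $p \mapsto \mathrm{transport}^{\mathrm{code}}(p)(\tr{\pt}_{2n})$, is $2n$-connected, and using $\Psi(\pt) = \id$ and $\Psi(a)(\tr{\pt}_{2n}) = \tr{a}_{2n}$ one computes that precomposing it with $\sigma$ gives exactly $\tr{\blank}_{2n} : A \to \trunc{A}_{2n}$, which is $2n$-connected. Since the first map is $2n$-connected, the two-out-of-three property for connected maps --- if $g$ is $m$-connected, then $f$ is $m$-connected iff $g \circ f$ is --- forces $\sigma$ to be $2n$-connected, as desired. (The hypothesis $n \ge 0$ is what guarantees $A$ is inhabited and $0$-connected, so that $\susp A$ is connected and the "proposition over a connected type'' arguments above apply.)
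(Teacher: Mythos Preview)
Your proposal is correct: it is precisely the encode--decode proof of \cite[Theorem~8.6.4]{TheBook}, which is exactly what the paper invokes --- the paper does not give its own argument for Freudenthal but simply cites the Book. Your sketch matches that proof in all essentials (wedge connectivity to build the code fibration, encode--decode to identify $\trunc{N=y}_{2n}$ with $\mathrm{code}(y)$, and the two-out-of-three property for $2n$-connected maps to conclude).
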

This is \cite[Theorem~8.6.4]{TheBook}.

The final building block we need is:
\begin{lemma}
  There is a pullback square
  \[
    \begin{tikzcd}
      \susp\Omega A \ar[d,"\varepsilon_A"']\ar[r] & A \vee A \ar[d] \\
      A \ar[r,"\Delta"'] & A \times A
    \end{tikzcd}
  \]
  for any $A : \Type_\pt$.
\end{lemma}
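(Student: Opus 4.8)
The plan is to realize the square as an instance of the generalized descent theorem proved above, that is, as the pullback square attached to a cube whose four side faces are pullbacks. First I would build the cube. For its bottom square I would take the basepoint map $1\to A$ together with the two coordinate inclusions $(\id,\pt),(\pt,\id):A\to A\times A$; this square commutes (both composites $1\to A\times A$ are $(\pt,\pt)$), its two legs have pushout $A\sqcup^1 A=A\vee A$, and the induced map $A\vee A\to A\times A$ is precisely the wedge inclusion. For the rest of the cube I would place $A$ at the vertex lying over $A\times A$, with structure map the diagonal $\Delta$, and define the three remaining top vertices $A_0,A_1$ (over the two copies of $A$) and $C$ (over $1$) as the pullbacks of $\Delta$ along the corresponding three maps into $A\times A$ already present in the bottom square, with all other edges the evident projections.

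Next I would verify the hypotheses of descent. The cube commutes by construction, and its side faces are pullbacks: the two faces over the coordinate inclusions by the very definitions of $A_0$ and $A_1$, and the two faces over $1\to A\times A$ because, by the pasting lemma for pullbacks, $C$ (the pullback of $\Delta$ along $1\to A\times A$) is equally the pullback of $A_0\to A$ along $1\to A$ and of $A_1\to A$ along $1\to A$. Then I would compute the three new vertices: pulling $\Delta$ back along a coordinate inclusion kills one coordinate, so $A_0\equiv(a : A)\times(b : A)\times(a=b)\times(a=\pt)\equiv 1$ and likewise $A_1\equiv 1$, while pulling back along $1\to A\times A$ gives $C\equiv(a : A)\times(a=\pt)\times(a=\pt)\equiv(\pt=\pt)=\Omega A$; moreover the maps $A_0\to A$ and $A_1\to A$ become the basepoint inclusion under these equivalences. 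Feeding this into the generalized descent theorem, whose conclusion identifies the pushout $A_0\sqcup^{C}A_1\equiv 1\sqcup^{\Omega A}1=\susp\Omega A$ as the pullback of $A\vee A\to A\times A\leftarrow A$, yields the desired square, with the wedge inclusion along the bottom and the canonical map $\susp\Omega A\to A\vee A$ on the left.

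What remains, and what I expect to be the only real point of friction, is checking that the top map $\susp\Omega A\to A$ output by descent is literally the counit $\varepsilon_A$ of the suspension--loop adjunction. By the previous step it sends both poles to $\pt$, and the content is that, tracing the filler of the top square of the cube, the meridian attached to a loop $\omega:\Omega A$ is sent to the path $\omega$ itself; this is exactly the clause characterizing $\varepsilon_A$ on the higher inductive type $\susp\Omega A$. This is bookkeeping with identifications rather than anything conceptual, so I would either carry it out directly from the definitions or defer it to the formalization. As a sanity check on the whole construction, the pullback square forces $\fib(\varepsilon_A)\equiv\fib(A\vee A\to A\times A)$, which by the preceding lemma is $\Omega A\ast\Omega A$ — the expected homotopy fiber of the counit.
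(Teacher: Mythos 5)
Your proposal is correct and follows essentially the same route as the paper: the same cube over the bottom square $A \leftarrow 1 \to A \hookrightarrow A\times A$ with $\Delta$ over $A\times A$, the observation that the side faces are pullbacks (with the pullbacks of $\Delta$ along the coordinate inclusions contractible and the pullback over the basepoint being $\Omega A$), and the generalized descent theorem identifying $1\sqcup^{\Omega A}1 = \susp\Omega A$ as the pullback. The only difference is your explicit flagging of the identification of the induced map $\susp\Omega A \to A$ with the counit $\varepsilon_A$, a bookkeeping point the paper leaves implicit.
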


\begin{proof}
Note that the pullback of $\Delta:A\to A\times A$ along either inclusion $A\to A\times A$ is contractible. So we have a cube
\begin{equation*}
\begin{tikzcd}
& \Omega A \arrow[dl] \arrow[d] \arrow[dr] \\
1 \arrow[d] & 1 \arrow[dl] \arrow[dr] & 1 \arrow[dl,crossing over] \arrow[d] \\
A \arrow[dr] & A \arrow[d,"\Delta"] \arrow[from=ul,crossing over] & A \arrow[dl] \\
& A\times A
\end{tikzcd}
\end{equation*}
in which the vertical squares are all pullback squares. Therefore, if we pull back along the wedge inclusion, we obtain by the descent theorem for pushouts that the square in the statement is indeed a pullback square.
\end{proof}

\begin{theorem}[Stabilization]
  \label{thm:stabilization}
  If $k\ge n+2$, then $S : (n,k)\GType \to (n,k+1)\GType$ is an
  equivalence, and any $G : (n,k)\GType$ is an infinite loop space. \leanref{stabilization}
\end{theorem}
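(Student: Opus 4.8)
The plan is to make $S$ and $F$ fully explicit under the equivalence $(n,k)\GType\equiv\Type_\pt^{\ge k,\le n+k}$ that sends a higher group to its top delooping $G\mapsto B^kG$. Under this identification $S$ becomes $X\mapsto\trunc{\susp X}_{n+k+1}$, the functor $F$ becomes $Y\mapsto\Omega Y$, and the adjunction $S\dashv F$ is the one obtained from $\susp\dashv\Omega$ together with the universal property of $(n+k+1)$-truncation. It therefore suffices to show that, for $k\ge n+2$, both the unit $\eta_X\colon X\to\Omega\trunc{\susp X}_{n+k+1}$ and the counit $\varepsilon_Y\colon\trunc{\susp{\Omega Y}}_{n+k+1}\to Y$ are equivalences; then $S$ and $F$ are mutually inverse equivalences. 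We may assume $n\ge0$: for $n\le-1$ every object of $\Type_\pt^{\ge k,\le n+k}$ is contractible (being $(k-1)$-connected and at most $(k-1)$-truncated), so both assertions are trivial. In particular $k\ge n+2\ge2$.

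For the unit, write $\eta_X$ as the composite of the map $X\to\Omega\susp X$ with $\Omega$ applied to the truncation map $\susp X\to\trunc{\susp X}_{n+k+1}$. Since $X$ is $(k-1)$-connected with $k-1\ge0$, Freudenthal's theorem makes $X\to\Omega\susp X$ a $2(k-1)$-connected map, while $\susp X\to\trunc{\susp X}_{n+k+1}$ is $(n+k+1)$-connected (this is exactly its universal property), so its loop is $(n+k)$-connected. Hence $\eta_X$ is $\min(2k-2,n+k)$-connected, and the hypothesis $k\ge n+2$ is precisely what gives $2k-2\ge n+k$, so $\eta_X$ is $(n+k)$-connected. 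Its domain $X$ and its codomain $\Omega\trunc{\susp X}_{n+k+1}$ are both $(n+k)$-truncated, so the fibers of $\eta_X$ are simultaneously $(n+k)$-connected and $(n+k)$-truncated, hence contractible; therefore $\eta_X$ is an equivalence.

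For the counit, $\varepsilon_Y$ is the factorization through the truncation of the suspension--loop counit $\susp{\Omega Y}\to Y$, which exists because $Y$ is $(n+k+1)$-truncated. By the last lemma above, which exhibits $\susp{\Omega Y}$ as the pullback of $Y\xrightarrow{\Delta}Y\times Y\leftarrow Y\vee Y$, the fiber of $\susp{\Omega Y}\to Y$ agrees with the fiber of the wedge inclusion $Y\vee Y\to Y\times Y$; since $Y$ is $k$-connected, \autoref{lem:wedge-connectivity} makes that fiber $2k$-connected, so $\susp{\Omega Y}\to Y$ is $2k$-connected. As $2k\ge n+k+1$, comparing long exact sequences of homotopy groups shows that $\varepsilon_Y$ induces isomorphisms on $\pi_i$ for $i\le n+k+1$, while for $i>n+k+1$ both homotopy groups vanish because $Y$ and $\trunc{\susp{\Omega Y}}_{n+k+1}$ are $(n+k+1)$-truncated; as both types are moreover connected, Whitehead's theorem for truncated types~\cite[Theorem~8.8.3]{TheBook} makes $\varepsilon_Y$ an equivalence.

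With the unit and counit equivalences in hand, $S$ is an equivalence with inverse $F$. Consequently the forgetful maps $F\colon(n,k+1)\GType\to(n,k)\GType$ are equivalences for all $k\ge n+2$, whence $(n,\omega)\GType\equiv(n,k)\GType$ for $k\ge n+2$, and any $G\colon(n,k)\GType$ lifts to $(n,\omega)\GType$ and is thus an infinite loop space: iterating the unit equivalence produces deloopings $B^{k+m}G:=\trunc{\susp{B^{k+m-1}G}}_{n+k+m}$ with $\Omega B^{k+m+1}G\equiv B^{k+m}G$, the lower deloopings being $\Omega^{k-j}B^kG$ for $0\le j\le k$. The main obstacle is the bookkeeping of levels: one has to check that $k\ge n+2$ is exactly the inequality making Freudenthal's $2(k-1)$-connectivity dominate the $(n+k)$-connectivity forced by the truncation, and keep straight the off-by-ones in the connectivity of truncation maps and of $\Omega$ (and, dually, that $2k\ge n+k+1$ suffices for the counit); everything else is formal adjunction reasoning and Whitehead-type arguments.
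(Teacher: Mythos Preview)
Your proof is correct and follows essentially the same route as the paper: both arguments identify $S$ and $F$ with $\trunc{\susp{-}}_{n+k+1}$ and $\Omega$ on the level of deloopings, then show the unit is an equivalence via Freudenthal and the counit via the pullback square for $\susp\Omega A$ together with wedge connectivity. The only cosmetic differences are that the paper computes $SF$ on $(n,k)\GType$ (so the counit connectivity comes out as $2k-2$) whereas you compute it on $(n,k+1)\GType$ (getting $2k$), and that the paper finishes the counit argument by directly observing that the $(n+k)$-truncation of a $(2k-2)$-connected map is an equivalence, while you go through homotopy groups and Whitehead; your explicit handling of the degenerate case $n\le -1$ is a nice addition the paper omits.
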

\begin{proof}
  We show that $F\circ S=\id=S\circ F : (n,k)\GType \to (n,k)\GType$
  whenever $k\ge n+2$.

  For the first, the unit map of the adjunction factors as
  \[
    B^kG \to \Omega\susp B^kG \to \Omega\trunc{\susp B^kG}_{n+k+1}
  \]
  where the first map is $2k-2$-connected by Freudenthal, and the
  second map is $n+k$-connected. Since the domain is $n+k$-truncated,
  the composite is an equivalence whenever $2k-2 \ge n+k$.

  For the second, the counit map of the adjunction factors as
  \[
    \trunc{\susp\Omega B^kG}_{n+k} \to \trunc{B^kG}_{n+k} \to B^kG,
  \]
  where the second map is an equivalence. By the two lemmas above, the
  first map is $2k-2$-connected.
\end{proof}
For example, for $G : (0,2)\GType$ an abelian group, we have
$B^nG = K(G,n)$, an Eilenberg-MacLane space.

The adjunction ${S} \dashv {F}$ implies that the free group on a
pointed set $X$ is $\Omega\trunc{\susp X}_1=\pi_1(\susp X)$.  If $X$
has decidable equality, $\susp X$ is already $1$-truncated. It is an
open problem whether this is true in general.

Also, the abelianization of a set-level group $G : 1\Grp$ is
$\pi_2(\susp BG)$. If $G : (n,k)\GType$ is in the stable range ($k \ge
n+2$), then $SFG=G$.

\section{Perspectives on ordinary group theory}
\label{sec:perspectives}

In this section we shall indicate how the theory of higher groups can
yield a new perspective even on ordinary group theory.

From the symmetric groups $\Sym_n$, we can get other finite groups using
the constructions of~\autoref{sec:elementary-constructions}. Other
groups can be constructed more directly. For example,
$BA_n$, the classifying type of the alternating group, can be taken to
be the type of $n$-element sets $X$ equipped with a \emph{sign
  ordering}: this is an equivalence class of an ordering
$\Fin n \equiv X$ modulo even permutations. Indeed, there are only two
possible sign orderings, so this definition corresponds to
first considering the short exact sequence
\[
  1 \to A_n \to \Sym_n \xrightarrow{\mathrm{sgn}}{} \Sym_2 \to 1
\]
where the last map is the sign map, then realizing the sign map
as given by the map $\mathrm{Bsgn} : \BS_n \to \BS_2$ that takes
an $n$-element set to its set of sign orderings, and finally
letting $BA_n$ be the homotopy fiber of $\mathrm{Bsgn}$.

Similarly, $BC_n$, the classifying type of the cyclic group on $n$
elements, can be taken to be the type of $n$-elements sets $X$
equipped with a \emph{cyclic ordering}: an equivalence class of an
ordering $\Fin n \equiv X$ modulo cyclic permutations. But unlike
the above, where we had the coincidence that $\Aut(\Sym_2) \equiv
\Sym_2$,
this doesn't corresponds to a short exact sequence. Rather,
it corresponds to a sequence
\[
  1 \to C_n \to \Sym_n \to \Aut(\Fin(n-1)) \equiv \Sym_{(n-1)!}
\]
where the delooping of the last map is the map from $\BS_n$ to
$\BS_{(n-1)!}$ that maps an $n$-element set to the set of cyclic
orderings, of which there are $(n-1)!$ many -- since once we fix the
position in the ordering of a particular element,
we are free to permute the rest.

As another example,
consider the map $p : \BS_4 \to_\pt \BS_3$ that maps a 4-element set
$X$ to its set of 2-by-2 partitions, of which there $3$. Using this
construction, we can realize some famous semidirect and wreath product identities,
such as $A_4 \equiv S_2^2 \rtimes A_3$, $S_4 \equiv S_2^2 \rtimes
S_3$, and, for the octahedral group, $O_h \equiv S_2^3 \rtimes S_3
\equiv S_2 \wr S_3$.

\smallskip

Let us turn to a different way of getting new groups from old, namely
via covering space theory.

\subsection{\texorpdfstring{$1$}{1}-groups and covering spaces}
\label{sec:covering}

The connections between covering spaces of a pointed connected type
$X$ and sets with an action of the fundamental group of $X$ has
already been established in homotopy type
theory~\cite{FavoniaHarper2016}. Let us recall this connection and
expand a bit upon it.

For us, a pointed connected type $X$ is equivalently
an $\infty$-group $G:\infty\Grp$
with delooping $BG := X$.
A covering space over $BG$ is simply a type family $C : BG \to \Set$
that lands in the universe of sets.
Hence by our discussion of actions in~\autoref{sec:actions}
it is precisely a set with a $G$-action.
Since $\Set$ is a 1-type, $C$ extends uniquely to a type family
$C' : \trunc{BG}_1 \to \Set$,
but $\trunc{BG}_1$ is the delooping of the fundamental group
of $X$, and hence $C'$ is the uniquely determined
choice of a set with an action of the fundamental group.

The universal covering space is the simply connected cover of $BG$,
\[
  \widetilde{BG} : BG \to \Set, \quad
  z \mapsto \trunc{\pt = z}_0.
\]
Note that the total space of $\widetilde{BG}$ is indeed the
$1$-connected cover $BG\angled1$,
since $\trunc{\pt =_{BG} \pt}_0 \equiv (\tr{\pt} =_{\trunc{BG}_1} \tr{\pt})$.
Also note that if $G$ is already a 1-group, then this is just the right
action of $G$ on itself, and in general, it is the right action of $G$
on the fundamental group
(i.e., the decategorification of $G$)
via the truncation homomorphism from $G$ to $\pi_1(BG)$,
where we can also view $\pi_1(BG)$ as the $1\Grp$ decategorification
of $G$.

In general, there is a Galois correspondence between connected covers
of $BG$ and conjugacy class of subgroups of the fundamental group.
Indeed, if $C : BG \to \Set$ has a connected total space,
then the space $(g : \trunc{BG}_1) \times C'(g)$
is itself a connected, 1-truncated type,
and the projection to $\trunc{BG}_1$
induced an inclusion of fundamental groups
once a point $\pt : C'(\pt)$ has been chosen.

\begin{theorem}[Fundamental theorem of Galois theory for covering spaces]
  \label{thm:galois-zero}
  $\phantom{42}$
  \begin{enumerate}
  \item  The automorphism group of the universal covering space
    $\widetilde{BG}$ is isomorphic to
    the $1$-group decategorification of $G$,
    \[
      \Aut(\widetilde{BG}) \equiv \Decat_1(G) \equiv \pi_1(BG).
    \]
  \item Furthermore, there is an contravariant correspondence between
    conjugacy classes of subgroups of $\Decat_1(G)$ and connected
    covers of $BG$.
  \item This lifts to a Galois correspondence between subgroups of
    $\Decat_1(G)$ and pointed, connected covers of $BG$.  The normal
    subgroups correspond to Galois covers.
  \end{enumerate}
\end{theorem}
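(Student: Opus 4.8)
The plan is to reduce the whole statement to the classical theory of sets with a $\Gamma$-action, internalised in HoTT, where $\Gamma := \Decat_1(G) = \pi_1(BG)$. Recall from the paragraph preceding the theorem that, since $\Set$ is $1$-truncated, precomposition with $\tr{\blank}_1 : BG \to \trunc{BG}_1$ is an equivalence of categories between $\Set$-valued families on $\trunc{BG}_1$ and covering spaces $BG \to \Set$, and that $\trunc{BG}_1$ is the delooping of $\Gamma$. Under this identification covering spaces of $BG$ become $\Gamma$-sets, and $\widetilde{BG}$ becomes the right action of $\Gamma$ on itself (using $\trunc{\pt =_{BG} z}_0 \equiv \trunc{\tr{\pt} =_{\trunc{BG}_1}\tr{z}}_0$, with monodromy acting by right translation). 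So I would first replace $BG$ by $\trunc{BG}_1$, reducing to the case where $BG$ is a $1$-type; the identity $\Decat_1(G) \equiv \pi_1(BG)$ is then immediate from the definition of decategorification.

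For part (1), unwinding definitions, $\Aut(\widetilde{BG})$ is the group of $\Gamma$-equivariant self-equivalences of $\Gamma$ with its right-translation action. Any $\Gamma$-equivariant endomorphism is determined by the image of the neutral element, hence is a left translation $L_g$; and conversely each $L_g$ is $\Gamma$-equivariant and invertible, with inverse $L_{g^{-1}}$. The bijection $g \mapsto L_g$ carries the group operation to composition, giving the isomorphism $\Decat_1(G) \equiv \Aut(\widetilde{BG})$ (modulo a harmless $({-})^{\mathrm{op}}$ coming from the conventions for monodromy and composition).

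For parts (2) and (3), a pointed connected cover of $BG$ is, under the identification above, a pointed $\Gamma$-set $(S, s_0)$ whose total space is connected, which is exactly to say that the action on $S$ is transitive. The correspondence sends $(S, s_0)$ to the stabiliser subgroup $\mathrm{Stab}_\Gamma(s_0) \le \Gamma$ and sends a subgroup $H \le \Gamma$ to the pointed coset $\Gamma$-set $(\Gamma/H, eH)$. The internal orbit--stabiliser theorem --- the map $g\,\mathrm{Stab}_\Gamma(s_0) \mapsto g\cdot s_0$ being an equivalence of pointed $\Gamma$-sets --- together with the computation $\mathrm{Stab}_\Gamma(eH) = H$ shows these are mutually inverse, which is part (3); moreover a covering map between pointed connected covers, when it exists, is unique and exists precisely when the stabiliser of the source is contained in that of the target, which is the (inclusion-reversing, in number of sheets) Galois compatibility. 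Forgetting the basepoint replaces subgroups by conjugacy classes (as $\mathrm{Stab}_\Gamma(g\cdot s_0) = g\,\mathrm{Stab}_\Gamma(s_0)\,g^{-1}$, and transitivity makes all stabilisers of a given $S$ conjugate) and pointed connected covers by connected covers, yielding the contravariant correspondence of part (2). Finally, calling a connected cover \emph{Galois} when its deck group acts transitively on the fibre, for $\Gamma/H$ one computes $\Aut_\Gamma(\Gamma/H) \equiv N_\Gamma(H)/H$ acting freely on $\Gamma/H$; hence this action is transitive iff $N_\Gamma(H) = \Gamma$, i.e.\ iff $H \trianglelefteq \Gamma$, and in that case $\Aut_\Gamma(\Gamma/H) \equiv \Gamma/H$ is the deck group of the corresponding normal cover.

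What I expect to be the real work --- more than any single lemma --- is the supporting infrastructure: making ``subgroup of $\Gamma$'' and ``conjugacy class of subgroups'' into well-behaved \emph{types}, and establishing the orbit--stabiliser equivalence and its compatibility with covering maps at that level of generality. Once the reduction to $\Gamma$-sets is available, the remaining steps are internalisations of entirely classical facts about transitive actions.
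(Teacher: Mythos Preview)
The paper does not give a standalone proof of this theorem; it is stated after a brief discussion (the reduction to $\trunc{BG}_1$ and the observation that a pointed connected cover yields an inclusion of fundamental groups) and with a reference to prior formalized work on covering spaces. Your sketch is therefore considerably more detailed than anything the paper supplies for parts~(2) and~(3), and your orbit--stabilizer treatment is correct and is exactly the classical argument internalized.

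For part~(1), however, the paper takes a genuinely different route in the very next theorem, where it proves the $n$-cover generalization: it identifies $\BAut(U_n(BG))$ with the image of the map $1 \to (BG \to \Type^{\le n})$ picking out $U_n(BG)$, and then appeals to Theorem~7.1 of the join-construction paper to identify that image with $\trunc{BG}_{n+1}$ directly, using that $BG$ is connected. Specializing to $n=0$ yields part~(1) here without ever unpacking $\Gamma$-equivariant automorphisms. Your approach---reduce to $\Gamma$-sets and compute equivariant self-maps of the regular representation as left translations---is the concrete, classical route and is perfectly correct; the paper's is more abstract but generalizes uniformly to higher $n$ and sidesteps the left/right bookkeeping and the $(\blank)^{\mathrm{op}}$ you flag. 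Both are valid; the paper's buys the higher-$n$ statement for free, while yours makes the group structure of the deck transformations explicit.
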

Note that the universal covering space
and the trivial covering space
(constant at the unit type)
are canonically pointed,
reflecting the fact that
the two trivial subgroups are normal.

The first part of the fundamental theorem has a clear generalization to
higher groups:
\begin{theorem}[Fundamental theorem of Galois theory for $n$-covers,
  part one]
  The automorphism group of the universal $n$-type cover $U_n(BG)$,
  \[
    U_n(BG) : BG \to \Type^{\le n},
    \quad
    z \mapsto \trunc{\pt = z}_n
  \]
  of $BG$ is
  isomorphic to the $(n+1)$-group decategorification of $G$,
  \[
    \Aut(U_n(BG)) \equiv \Decat_{n+1}(G) \equiv \Pi_{n+1}(BG).
  \]
\end{theorem}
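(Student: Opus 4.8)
The plan is to recognise $U_n(BG)$ as the right regular action of the truncated group $H := \Pi_{n+1}(BG) \equiv \Decat_{n+1}(G)$, whose delooping is $BH := \trunc{BG}_{n+1}$, and then to compute the connected component of that action --- i.e., the delooping $\B\Aut$ --- directly. This is the proof of part~1 of \autoref{thm:galois-zero} with $0$ replaced by $n$ throughout. Since $\Type^{\le n}$ is $(n+1)$-truncated, the universal property of truncation makes precomposition with $\tr{\blank}_{n+1}$ an equivalence $(BH \to \Type^{\le n}) \equiv (BG \to \Type^{\le n})$, so it suffices to compute $\Aut$ in the left-hand type. Under this equivalence $U_n(BG)$ corresponds to $\rho_H : BH \to \Type^{\le n}$, $u \mapsto (\pt =_{BH} u)$, the right action of $H$ on itself: indeed $\tr{\blank}_{n+1}$ is $(n+1)$-connected, so $\ap_{\tr{\blank}_{n+1}} : (\pt =_{BG} z) \to (\tr{\pt}_{n+1} =_{BH} \tr{z}_{n+1})$ is $n$-connected, hence induces an equivalence $\trunc{\pt = z}_n \equiv \trunc{\tr{\pt}_{n+1} = \tr{z}_{n+1}}_n = (\tr{\pt}_{n+1} = \tr{z}_{n+1})$ naturally in $z$ (the last step because $BH$ is $(n+1)$-truncated), so $U_n(BG) = \rho_H \circ \tr{\blank}_{n+1}$.

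It remains to show $\B\Aut(\rho_H) \equiv BH$ as pointed connected types. Consider the pointed map
\[
  \phi : BH \to (BH \to \Type^{\le n}), \qquad w \mapsto \bigl(\lambda u.\,(w =_{BH} u)\bigr),
\]
with $\phi(\pt) = \rho_H$ (well-defined since each $w=u$ is $n$-truncated). On identity types, $\ap_\phi$ is the map $(w = w') \to \bigl(\phi(w) = \phi(w')\bigr) \equiv \prod_{u:BH}\bigl((w=u)\equiv(w'=u)\bigr)$ obtained from function extensionality and univalence, and this target is equivalent to $w = w'$ by the dependent Yoneda lemma --- evaluation at $u=w$ and $1_w$, with inverse $p \mapsto \lambda u.\lambda q.\,p^{-1}\cdot q$; the Yoneda correspondence lands automatically in the fiberwise \emph{equivalences} because $q \mapsto p^{-1}\cdot q$ is always an equivalence. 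Hence $\phi$ is an embedding. It is moreover surjective onto the connected component $\BAut(\rho_H)$: being in the image of $\phi$ is a proposition, and for any $F$ with $\trunc{\rho_H = F}_{-1}$ an actual equality $\rho_H = F$ exhibits $F$ as $\phi(\pt)$. An embedding that is surjective onto $\BAut(\rho_H)$ is a (pointed) equivalence $BH \equiv \BAut(\rho_H) = \B\Aut(\rho_H)$. Looping, we get $\Aut(U_n(BG)) \equiv \Aut(\rho_H) \equiv \Omega BH$, which is the carrier of $\Pi_{n+1}(BG) \equiv \Decat_{n+1}(G)$ with the same delooping, so the equivalence is an isomorphism of $(n+1)$-groups.

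I expect the main obstacle to be the dependent-Yoneda step: checking that evaluation $\prod_{u:BH}\bigl((w=u)\equiv(w'=u)\bigr) \to (w = w')$ is an equivalence, which amounts to verifying that every fiberwise self-equivalence of the based-path family $u \mapsto (w=u)$ is concatenation with a single path. The remaining subtlety is organisational: one must check that the resulting equivalence of carriers is pointed and intertwines path concatenation on $\Omega BH$ with composition of family automorphisms, so that it is genuinely an isomorphism of higher groups and not merely of underlying types --- and phrasing the argument in terms of deloopings, as above, is precisely what lets us avoid spelling this coherence out by hand.
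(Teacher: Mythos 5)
Your proof is correct, and the main difference from the paper's is that yours is self-contained. The paper's entire argument is to note that $\BAut(U_n(BG))$ is the image of the map $1 \to (BG \to \Type^{\le n})$ hitting $U_n(BG)$, and then to cite Theorem~7.1 of the join-construction paper for the identification of that image with $\trunc{BG}_{n+1}$; since $\B\Pi_{n+1}(BG) \equiv \trunc{BG}_{n+1}$ by definition, that finishes it. You instead prove the relevant instance of the cited result: first factoring through $\trunc{BG}_{n+1}$ using that $\Type^{\le n}$ is $(n+1)$-truncated and identifying $U_n(BG)$ with the right regular action of $\Pi_{n+1}(BG)$ via $\trunc{\pt = z}_n \equiv (\tr{\pt}_{n+1} = \tr{z}_{n+1})$, and then showing that the Yoneda-type map $w \mapsto \lambda u.(w=u)$ is an embedding whose image is exactly the component of the regular action. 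What the citation buys the paper is brevity; what your route buys is an explicit description of the equivalence $\trunc{BG}_{n+1} \equiv \BAut(U_n(BG))$ as a (dependent) Yoneda embedding, which makes the compatibility with the group structure transparent, as you note at the end. Two small points to tidy up: the evaluation map in your Yoneda step naturally lands in $w' = w$ rather than $w = w'$ (harmless, compose with inversion), and you should record that identifications in $\Type^{\le n}$ agree with those in $\Type$ because being $n$-truncated is a proposition, so that $\phi(w) = \phi(w')$ really is the type of fiberwise equivalences you compute with.
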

\begin{proof}
  Note that
  $\BAut(U_n(BG))$ is the image of the map $1 \to (BG \to \Type^{\le
    n})$ that sends the canonical element to $U_n(BG)$. Since $BG$ is
  connected, this image is exactly $\trunc{BG}_{n+1}$ by
  \cite[Theorem~7.1]{joinconstruction}. Then we are done,
  since $\B\Pi_{n+1}(BG) \equiv \trunc{BG}_{n+1}$, by definition.
\end{proof}
It is possible to use the other parts
of~\autoref{thm:galois-zero} in order to \emph{define} the notions of
subgroup and normal subgroup for $n$-groups, which then become
\emph{structure on} rather than a \emph{property of} a homomorphism $f
: K \to G$.
Explicitly, the structure of a \emph{normal subgroup} on such an $f$
is a delooping $B(G \dblslash K)$ of the type $G \dblslash K$
together with a map $Bq : BG \to_\pt B(G \dblslash K)$ giving rise to a
fiber sequence
\begin{equation}\label{eq:normal-fiber-sequence}
  G \dblslash K \to BK \xrightarrow{Bf}{} BG
  \xrightarrow{Bq}{} B(G \dblslash K).
\end{equation}

\subsection{Central extensions and group cohomology}
\label{sec:group-cohomology}

The cohomology of a higher group $G$ is simply the cohomology of its
delooping $BG$. Indeed, for any spectrum $A$, we define
\[
  H_{\mathrm{Grp}}^k(G, A) := \trunc{BG \to_\pt B^kA}_0.
\]
Of course, to define the $k$'th cohomology group, we only need the
$k$-fold delooping $B^kA$.

If $A:(\infty,2)\GType$ is a braided $\infty$-group, then
we have the second cohomology group $H_{\mathrm{Grp}}^2(G, A)$, and an
element $c:BG \to_\pt B^2A$ gives rise to a \emph{central extension}
\[
  BA \to BH \to BG \xrightarrow{c}{} B^2A,
\]
where $BH$ is the homotopy fiber of $c$.  This lifts to the world of
higher groups the usual result that isomorphism classes of central
extensions of a $1$-group $G$ by an abelian $1$-group $A$
are given by cohomology classes in $H_{\mathrm{Grp}}^2(G, A)$.

\smallskip

In the Spectral repository there is full formalization of the Serre
spectral sequence for cohomology \cite{SerreSpectralSequence}.
If we have any normal subgroup fiber sequence for $\infty$-groups as
in~\eqref{eq:normal-fiber-sequence}, then we get a corresponding
spectral sequence with $E_2$-page
\[
  H_{\mathrm{Grp}}^p(G \dblslash K, H_{\mathrm{Grp}}^q(K, A))
\]
and converging to $H_{\mathrm{Grp}}^n(G, A)$, where $A$ is any
truncated, connective spectrum, which could even be a left $G$-module,
in which case we reproduce the \emph{Hochschild-Serre spectral
  sequence}.

%
%
%
%
%
%
%

\section{Formalization}
\label{sec:formalization}
We have formalized many results of this paper. We use the proof assistant Lean 2\footnote{\url{https://github.com/leanprover/lean2}}. This is an older version of the proof assistant Lean\footnote{\url{https://leanprover.github.io/}} (version 3.3 as of January 2018). We use the old version, since the newer version doesn't officially support HoTT, although there is an experimental library for HoTT\footnote{\url{https://github.com/gebner/hott3}}, but that doesn't have as much theory as the library in Lean 2.

The Lean 2 HoTT library is divided in two parts, the core library%
\footnote{\url{https://github.com/leanprover/lean2/blob/master/hott/hott.md}}
and the formalization of spectral
sequences\footnote{\url{https://github.com/cmu-phil/Spectral}}. We
worked in the latter, so that we could use the results from that
repository, such as theorems about Eilenberg-MacLane spaces and
pointed maps. All results in this paper are stated in one file%
\footnote{\url{https://github.com/cmu-phil/Spectral/blob/master/higher_groups.hlean}}, although for many results the main parts of the proof is elsewhere (in Emacs, click on a name and press \texttt{M-.} to find a definition).

To build the file, install Lean 2 via the instructions from that
repository, and then download the Spectral repository and compile it
(you can use the command \texttt{path/to/lean2/bin/linja} on the
command-line to compile the library you're in). The Spectral
repository contains some unproven results, marked by \textbf{sorry}. You can write \texttt{print axioms theoremname} in a file to ensure that \textbf{sorry} isn't used in the proof.

\section{Conclusion}
\label{sec:conclusion}

We have presented a theory and formalization of higher groups in
HoTT, and we have proved that for set-level structures we recover the
well-known objects: groups and abelian groups. A possible next step
would be to do the same for the $1$-type objects.
The corresponding algebraic objects have a long history.
Strict 2-groups predate category theory as they originate in
Whitehead's study of \emph{crossed modules} \cite{Whitehead1949}.
The theory of weak $2$-groups was begun by Grothendieck's student Hoàng
Xuân Sính \cite{Sinh1975} and further developed in \cite{BaezLauda2004}.
It should be possible to prove within HoTT that weak $2$-groups and
crossed modules are equivalent to $2$-groups in our sense, when we use the
respective, correct notions of equivalence.

Symmetric $2$-groups are by the stabilization theorem the same as
$1$-truncated symmetric spectra. These are described more simply than
arbitrary crossed modules as \emph{Picard groupoids}. This is part of the
stable homotopy hypothesis~\cite{Gurski-stable-homotopy-hypothesis,JohnsonOsorno}.
It should also be possible to develop the theory of Picard groupoids
in HoTT, and thus prove the corresponding stable homotopy hypothesis.

Higher groups have been intensively studied in homotopy theory, in
particular after $p$-completion for $p$ a prime. A \emph{p-compact
  group} is an $\bF_p$-local $\infty$-group whose carrier is
$\bF_p$-finite, see \cite{DwyerWilkerson1994}. They are good
homotopical analogues of Lie groups, and they interact nicely with
compact Lie groups, for instance:
\begin{theorem}[\cite{DwyerZabrodsky1987}]
  Let $P$ be a $p$-toral group, and let $G$ be a compact Lie group. Then
  $\trunc{BP\to_\pt BG}_0$ is isomorphic to the conjugacy classes of
  homomorphisms from $P$ to $G$.
\end{theorem}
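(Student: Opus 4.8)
The plan is to compute the type of pointed maps $BP \to_\pt BG$ as a type, and then to read off its set of connected components $\trunc{BP\to_\pt BG}_0$. This is a theorem of classical homotopy theory with no known purely type-theoretic proof, so I will only indicate the strategy. The delooping functor $B$ sends a continuous homomorphism $P \to G$ to a pointed map $BP \to_\pt BG$ and conjugate homomorphisms to pointed-homotopic maps; the content is that the resulting function from conjugacy classes of homomorphisms to $\trunc{BP\to_\pt BG}_0$ is a bijection. I would prove surjectivity and injectivity together by identifying the mapping type outright --- ``outright'' meaning after $\bF_p$-completion, a step which is invisible on the set $\trunc{BP\to_\pt BG}_0$ precisely because $P$ is $p$-toral and $G$ is compact Lie.

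The first and hardest step is the case $P = V$, a finite elementary abelian $p$-group. Here the key input is the solution of the Sullivan conjecture, in the shape of the Dwyer--Zabrodsky--Lannes computation of the mapping space: after $\bF_p$-completion,
\[
  (BV \to BG) \equiv \coprod_{[\rho]} BC_G(\rho),
\]
the disjoint union over conjugacy classes of homomorphisms $\rho : V \to G$, where $C_G(\rho)$ is the centralizer of a chosen representative. Passing to $\trunc{\blank}_0$ then settles the claim for $V$. This step rests on Miller's theorem and on the good behaviour of $BG$ under $\bF_p$-completion, neither of which is currently available internally in HoTT; a type-theoretic account would first have to build that infrastructure, and this is the main obstacle.

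Next I would bootstrap from elementary abelian groups to all finite $p$-groups by induction on $\abs{P}$. Choosing a central $\bZ/p\bZ$-subgroup $C$ of $P$ with quotient $Q := P/C$, one has the fiber sequence $BC \to BP \to BQ$. A pointed map $BP \to_\pt BG$ restricts on a fiber to a map $BC \to BG$ which, by the elementary abelian case, is up to homotopy the delooping $Bc$ of a homomorphism $c : C \to G$, well-defined up to conjugacy since $BQ$ is connected. The coherent choice of homotopies realizing this is an obstruction problem whose indeterminacy is governed by a mapping type out of $BQ$, hence by the inductive hypothesis together with the $V$-case again. Chasing $\trunc{\blank}_0$ and $\trunc{\blank}_1$ of the relevant mapping types through this analysis identifies $\trunc{BP\to_\pt BG}_0$ with conjugacy classes of homomorphisms $P \to G$.

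Finally I would pass from finite $p$-groups to $p$-toral $P$. Writing $1 \to T \to P \to \pi \to 1$ with $T$ a torus of rank $r$ and $\pi$ a finite $p$-group, the torus is the increasing union of its finite $p$-subgroups $(\bZ/p^j\bZ)^r$, so $BP$ is a homotopy colimit of classifying types of finite $p$-groups. A Milnor $\lim^1$ sequence together with the $\bF_p$-completeness of $BG$ on the homotopy groups in play carries the identification already established for finite $p$-groups through the colimit, yielding the result for $P$. Assembling the three steps proves the theorem.
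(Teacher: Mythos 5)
The paper does not prove this statement: it is quoted verbatim from the literature (\cite{DwyerZabrodsky1987}) in the concluding section as an illustration of how higher groups interact with compact Lie groups, so there is no internal proof to compare yours against. Your sketch is a fair summary of the standard classical argument --- Lannes/Miller for an elementary abelian $V$ giving $(BV \to BG)^{\wedge}_p \simeq \coprod_{[\rho]} B C_G(\rho)^{\wedge}_p$, then induction over a central extension $1 \to \bZ/p\bZ \to P \to Q \to 1$ for finite $p$-groups, then a colimit over the finite $p$-subgroups of the torus for the $p$-toral case --- and you are right that the hard inputs (Miller's theorem, the behaviour of $BG$ under $\bF_p$-completion, and the $\lim^1$ control in the toral step) are exactly where the content lives and are not available internally in HoTT. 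Two caveats are worth recording. First, the claim that $\bF_p$-completion ``is invisible on $\trunc{BP\to_\pt BG}_0$'' is not a formality: identifying $\pi_0$ of the uncompleted mapping space with $\pi_0$ of the completed one is part of what Dwyer--Zabrodsky actually prove, not a bookkeeping step. Second, in the setting of this paper the obstacle is prior to any of this: plain HoTT cannot even express ``compact Lie group'' or ``$p$-toral group'' as structures on a higher group (the paper points to cohesive HoTT for that), so a type-theoretic proof is blocked at the level of the statement, not merely at the level of the Sullivan-conjecture machinery. As an account of the classical proof your outline is essentially correct; as a proof it is, as you say yourself, a strategy with the decisive steps deferred.
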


Higher groups also play a particularly prominent role in the
development of quantum field theory in cohesive homotopy type
theory~\cite{ShreiberShulman}. In cohesive type theory we can actually
capture the topological or smooth structure of groups and their
classifying types, and hence develop Lie theory properly, including
the higher group generalization thereof.
All of our results only use the core part of HoTT, and hence they
remain valid also in cohesive HoTT.

Note that we have crucially used a trick to study higher groups in
HoTT, namely that these can be represented by pointed, connected
types. The alternative would have been to define them as group-like
algebras for the little $k$-cubes operad $E_k$. 
But this requires exactly the kind of infinitary tower of coherence
conditions that we don't yet know how to define in HoTT. (Or whether it
is even possible.) Thus, while
we have the type of higher groups, we do not have the type of higher
monoids (general $E_k$-algebras).
Thus their theory, and the corresponding stabilization theorem, is
currently beyond the reach of HoTT.

\section*{Acknowledgement}
  The authors gratefully acknowledge the support of the Air Force
  Office of Scientific Research through MURI grant
  FA9550-15-1-0053. Any opinions, findings and conclusions or
  recommendations expressed in this material are those of the authors
  and do not necessarily reflect the views of the AFOSR.
    
\printbibliography
\end{document}